\documentclass[letter,11pt]{article}
\usepackage{pdfpages}
\usepackage{fullpage}
\usepackage{textcomp}
\usepackage{setspace}
\usepackage{subcaption}
\usepackage{enumerate}
\usepackage{amsmath}
\usepackage{amsfonts}
\usepackage{graphicx}
\usepackage{tcolorbox}
\usepackage{amssymb}
\usepackage{multirow}
\usepackage{soul}
\usepackage{colortbl}
\usepackage{wrapfig}
\usepackage{algorithm}
\usepackage{algorithmicx}
\usepackage{algpseudocode}
\usepackage{amsthm}
\usepackage{todonotes}
\usepackage{mathtools}
\usepackage{mathrsfs}

\usepackage{thmtools} 
\usepackage{thm-restate}
\usepackage{physics}
\usepackage{bm}

\algtext*{EndWhile}% Remove "end while" text
\algtext*{EndIf}% Remove "end if" text
\algtext*{EndFor}% Remove "end for" text

\newtheorem{lemma}{Lemma}

\numberwithin{lemma}{section}
\newtheorem{theorem}[lemma]{Theorem}
\newtheorem{theorem*}[lemma]{Theorem*}

\newtheorem{claim}[lemma]{Claim}
\newtheorem{proposition}[lemma]{Proposition}

\title{Parameterized Approximation of Rectangle Stabbing}

\author{Huairui Chu\thanks{Department of Computer Science, University of California Santa Barbara. Email: huairuichu@ucsb.edu, aj66@ucsb.edu, daniello@ucsb.edu, anikait@ucsb.edu, tschibler@ucsb.edu, xiaoyangxu@ucsb.edu}\and Ajaykrishnan E S$^*$\and Daniel Lokshtanov$^*$\and Anikait Mundhra$^*$\and Thomas Schibler$^*$\and Xiaoyang Xu$^*$\and Jie Xue\thanks{Department of Computer Science, New York University Shanghai. Email: jiexue@nyu.edu}}

\date{}

\begin{document}

% \pagenumbering{gobble}

\maketitle

\bibliographystyle{plainurl}

\begin{abstract}
In the {\sc Rectangle Stabbing} problem, input is a set ${\cal R}$ of axis-parallel rectangles and a set ${\cal L}$ of axis parallel lines in the plane. 
The task is to find a minimum size set ${\cal L}^* \subseteq {\cal L}$ such that for every rectangle $R \in {\cal R}$ there is a line $\ell \in {\cal L}^*$ such that $\ell$ intersects $R$.
Gaur et al. [Journal of Algorithms, 2002] gave a polynomial time $2$-approximation algorithm, while Dom et al. [WALCOM 2009] and Giannopolous et al. [EuroCG 2009] independently showed that, assuming FPT $\neq$ W[1], there is no algorithm with running time $f(k)(|{\cal L}||{\cal R}|)^{O(1)}$ that determines whether there exists an optimal solution with at most $k$ lines. 
We give the first parameterized approximation algorithm for the problem with a ratio better than $2$. In particular we give an algorithm that given ${\cal R}$, ${\cal L}$, and an integer $k$ runs in time $k^{O(k)}(|{\cal L}||{\cal R}|)^{O(1)}$ and either correctly concludes that there does not exist a solution with at most $k$ lines, or produces a solution with at most $\frac{7k}{4}$ lines. 
We complement our algorithm by showing that unless FPT $=$ W[1], the {\sc Rectangle Stabbing} problem does not admit a $(\frac{5}{4}-\epsilon)$-approximation algorithm running in $f(k)(|{\cal L}||{\cal R}|)^{O(1)}$ time for any function $f$ and $\epsilon > 0$.
% \textbf{\color{red} Optional:} We also consider a natural generalization to 3 dimensions, and show a parameterized approximation ratio of 8/3.
\end{abstract}

% \newpage
% \pagenumbering{arabic}

\section{Introduction}
In this paper, we consider the {\sc Rectangle Stabbing} problem, which is a well-studied geometric covering problem defined as follows.
\begin{tcolorbox}[colback=gray!5!white,colframe=gray!75!black]
        \textsc{Rectangle Stabbing} \hfill \textbf{Parameter:} $k = |\mathcal{L}^*|$
        \vspace{0.1cm} \\
        \textbf{Input:} A set $\mathcal{R}$ of axis-parallel rectangles and a set $\mathcal{L}$ of axis-parallel lines in $\mathbb{R}^2$ \\
        \textbf{Goal:} Compute a minimum subset $\mathcal{L}^* \subseteq \mathcal{L}$ such that $\mathcal{R}$ is stabbed by $\mathcal{L}^*$
\end{tcolorbox}

Here a line $\ell \in {\cal L}$ {\em stabs} a rectangle $R \in {\cal R}$ if $\ell$ and $R$ have non-empty intersection. 
The problem was first considered by Hassin and Megiddo~\cite{hassin1991approximation}, who showed that the problem is NP-hard even when all the rectangles are horizontal line segments of unit length. 
In addition to being a very natural special case of the {\sc Set Cover} problem, the {\sc Rectangle Stabbing} problem generalizes the {\sc Optimal Discretization} problem where input is a set of points in $\mathbb{R}^2$, each point colored with one of multiple colors, and the task is to divide the plane using the minimum number of axis-parallel lines such that no cell contains points of different colors~\cite{cualinescu2005separating, kratsch2021optimal}. This has applications to fault tolerant sensor networks~\cite{cualinescu2005separating} and is also an idealized version of discretization problems considered in machine learning~\cite{garcia2012survey,kratsch2021optimal,witten2005practical}.
Tardos~\cite{tardos1995transversals} proved that for every set ${\cal R}$ of axis-parallel rectangles and integer $k$, either there exists a subset ${\cal R}^*$ of  ${\cal R}$ of size $k$ such that no line in ${\cal L}$ stabs two rectangles in  ${\cal R}^*$, or all the rectangles in  ${\cal R}$ can be stabbed using at most $2k$ lines. However the proof of Tardos is non-constructive in the sense that it does not lead to a polynomial time algorithm that produces one of the two outcomes. 
In 2002, Gaur et al.~\cite{gaur2002constant} gave a polynomial time factor $2$ approximation algorithm for the {\sc Rectangle Stabbing} based on an elegant rounding scheme for the relaxation of the most natural integer linear programming formulation of the problem.
Subsequently Ben-David et al.~\cite{ben2012approximability} showed an integrality gap of $2-\epsilon$ for every $\epsilon>0$ for the considered integer linear programming formulation. 
Despite significant attention to related variants of the problem~\cite{cualinescu2005separating,ChanD0SW18,eisenbrand2021qptas,even2008algorithms,khan2024approximation,khan2024ptas,kovaleva2006approximation,xu2007constant} from the perspective of approximation algorithms, the factor $2$ approximation of  Gaur et al.~\cite{gaur2002constant} remains the state of the art. 

The {\sc Rectangle Stabbing} problem, and variants thereof, have also been considered from the perspective of parameterized algorithms,
parameterized by the optimal value $k$. Dom et al.~\cite{dom2009parameterized} (see also~\cite{dom2008parameterized}) and Giannopolous et al.~\cite{giannopoulos2013fixed} independently showed that, assuming FPT $\neq$ W[1], there is no algorithm with running time $f(k)n^{O(1)}$ that determines whether there exists an optimal solution with at most $k$ lines. Here $n = \abs{\cal L} + \abs{\cal R}$, and the hardness results of~\cite{dom2009parameterized,giannopoulos2013fixed} apply even in the case when all the rectangles are squares of the same size. Giannopolous et al.~\cite{giannopoulos2013fixed} additionally show that the hardness result applies when ${\cal R}$ is a set of disjoint, but not axis-parallel rectangles. 
Both Dom et al.~\cite{dom2009parameterized} and Giannopolous et al.~\cite{giannopoulos2013fixed} give $k^{O(k)}n^{O(1)}$ time algorithms for the case when ${\cal R}$ is a set of {\em disjoint} axis-parallel squares of the same size, and pose as an open problem whether the case of disjoint axis-parallel rectangles admits an algorithm with running time $f(k)n^{O(1)}$. This was resolved in the affirmative by Heggernes et al.~\cite{heggernes2013fixed}, who gave an algorithm with running time $k^{O(k^2 \log k)}n^{O(1)}$ for this case.

The instances of {\sc Rectangle Stabbing} that are equivalent to instances of {\sc Optimal Discretization} have axis-parallel rectangles that may not be disjoint. Thus the algorithm of Heggernes et al.~\cite{heggernes2013fixed} does not imply an algorithm with running time $f(k)n^{O(1)}$ for the {\sc Optimal Discretization} problem, where $k$ is the solution size. An algorithm for {\sc Optimal Discretization} with running time $k^{O(k^2 \log k)}n^{O(1)}$ was given by Kratsch et al.~\cite{kratsch2021optimal}.

In this paper, we initiate the study of parameterized approximation algorithms (see the surveys~\cite{feldmann2020survey,kortsarz2016fixed,marx2008parameterized} for an introduction to parameterized approximation) for the {\sc Rectangle Stabbing} problem. Our main result is an algorithm with running time dependence on the solution size $k$ which is better than what is possible (assuming FPT $\neq$ W[1]) for parameterized algorithms computing exact solutions, while simultaneously achieving an approximation ratio that is substantially better than that of the best known polynomial time approximation algorithm~\cite{gaur2002constant}. Specifically we show the following. 
\begin{theorem} \label{thm-algorithm}
    There exists a $\frac{7}{4}$-approximation algorithm for \textsc{Rectangle Stabbing} with running time $k^{O(k)} \cdot n^{O(1)}$, where $n = |\mathcal{R}| + |\mathcal{L}|$.
\end{theorem}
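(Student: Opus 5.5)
We combine branching, which costs $k^{O(k)}$, with the LP-rounding idea of Gaur et al.~\cite{gaur2002constant}. Fix an unknown optimal solution $\mathcal{L}^*$ with $k = h + v$ lines, $h$ horizontal and $v$ vertical; we may guess $h$ and $v$ in $O(k)$ ways. Rectangles stabbed by a horizontal line of $\mathcal{L}^*$ are \emph{H-rectangles}; the remaining ones, stabbed only by vertical lines of $\mathcal{L}^*$, are \emph{V-rectangles}. If we knew this partition, we could solve each side exactly: stabbing intervals on a line by points is solvable in polynomial time by the greedy algorithm. The resulting solution would have $h + v = k$ lines. Gaur et al.\ lose the factor $2$ because the LP only tells them fractionally which side each rectangle belongs to. The plan is to use $k^{O(k)}$ guesses to pin down enough of this partition that only part of the solution pays the factor $2$. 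Concretely, we aim for a solution of size $\min\{h,v\} \cdot \frac{3}{2} + \max\{h,v\}$, or some similar expression bounded by $\frac{7}{4}k$.

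\textbf{Step 1 (structure via guessing).} Assume without loss of generality that $h \le v$. Sort the horizontal lines of $\mathcal{L}^*$ by $y$-coordinate; they cut the plane into $h+1$ horizontal slabs. For each horizontal line of $\mathcal{L}^*$, we guess its index among the candidate lines relative to a small set of ``landmark'' lines. Computing the greedy interval-stabbing solution on projections gives $O(k)$ canonical landmark positions on each axis, so there are $k^{O(k)}$ guesses overall. A rectangle that contains no landmark coordinate in its vertical projection lies strictly inside one slab between consecutive landmarks. After guessing, these rectangles can be classified as definitely V-rectangles or ambiguous.

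\textbf{Step 2 (charging).} After the guesses, the ambiguous rectangles form a structured family. Each such rectangle crosses between only two consecutive slabs, or it is known to be stabbed by one of a guessed set of lines. We then solve the residual instance with the Gaur et al.\ LP rounding, but restricted so that only the lines in one direction are doubled. The rounding assigns each rectangle to the side where its LP mass is at least $1/2$, which would double both sides. Instead, we fix the horizontal side to the guessed $h$ lines, costing $h$, and solve the vertical side exactly by greedy. In the unbalanced case $h \le k/4$, the doubling loss is at most $h$, giving $k + h \le \frac{5}{4}k$. In the balanced case, we run the symmetric argument on whichever side has fewer ambiguous decisions. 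Averaging over the two options should yield $\frac{7}{4}k$.

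\textbf{Main obstacle.} The hard step is showing that after $k^{O(k)}$ guesses, the residual ambiguity costs at most $\frac{3}{4}k$ extra lines. This requires a careful charging argument: each ambiguous rectangle must be stabbed by an added line that is charged to a pair consisting of one optimal horizontal line and one optimal vertical line, with each optimal line charged at most $3/4$ on average. I would first try the grid formed by the guessed optimal lines. Within each grid cell, rectangles contained in the cell must be stabbed by neither family, so they are absent. Rectangles crossing cell boundaries are then the only ones to handle, and I would try to bound the extra lines per crossing pattern. I am not confident the constant comes out to exactly $7/4$ without further case analysis.
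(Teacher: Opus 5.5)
Your proposal is a plan rather than a proof. The step you flag as the ``main obstacle'' is essentially the whole argument, and the two concrete steps you do state would fail.

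In Step~1, greedily stabbing the projections does not give $O(k)$ landmarks. The rectangles $[0,1]\times[2i,2i+1]$, $i\in[n]$, are all stabbed by one vertical line, yet their $y$-projections are pairwise disjoint, so the horizontal greedy uses $n$ lines. A candidate set of size $\mathrm{poly}(k)$ has to be built by interleaving the two directions, which is what Lemma~\ref{lem-sweep} does: sweep upward and place a horizontal line only when the rectangles in the current open strip can no longer be stabbed by $k_\mathsf{v}$ vertical lines. This yields $\mathcal{H}_1$ with $|\mathcal{H}_1|\le k_\mathsf{h}$, nicely positioned with respect to $\mathcal{H}^*$, and $\mathcal{V}_0$ with $|\mathcal{V}_0|=O(k^2)$, such that $\mathcal{H}_1\cup\mathcal{V}_0$ stabs $\mathcal{R}$. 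An analogous $O(k^2)$ horizontal set $\mathcal{H}_0$ is obtained only after discarding redundant rectangles (Lemma~\ref{lem-redundant}).

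In Step~2, ``fix the horizontal side to the guessed $h$ lines'' presupposes knowing $\mathcal{H}^*$ itself, which needs roughly $|\mathcal{H}|^{h}$ guesses. If it could be done in $k^{O(k)}n^{O(1)}$ time, greedy on the vertical side would solve the problem exactly, contradicting the W[1]-hardness cited in the introduction. Relative to a small candidate set you can only guess \emph{which strips} contain optimal lines, not the lines themselves.

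The missing idea is how such strip guesses yield a solution of size $\frac74k$. The paper calls a strip of $\varGamma(\mathcal{V}_0)$ light if it contains exactly one line of $\mathcal{V}^*$ and heavy if it contains at least two. It buys $\mathcal{V}_0\cap\mathcal{V}^*$, the boundaries of all heavy strips (paid for by their $\ge 2$ optimal lines), and the boundaries of every second light strip. The remaining light strips $\varGamma_\mathsf{v}$ are then pairwise separated, and $|\varGamma_\mathsf{v}|+|\mathcal{V}_1|\le\frac32k_\mathsf{v}$ (Lemma~\ref{lem-vstrip}). In the horizontal direction all light strips of $\varGamma(\mathcal{H}_1\cup\mathcal{H}_0)$ are kept. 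A charging argument shows that $\mathcal{H}_1$ together with the extra separating lines costs at most $k_\mathsf{h}$, giving $2k_\mathsf{h}$ in total (Lemma~\ref{lem-hstrip}). Separation guarantees that every rectangle not yet stabbed meets at most one guessed strip in each direction. Choosing one line per strip is therefore solved \emph{exactly} by 2-SAT (Lemmas~\ref{lem-atmost1} and~\ref{lem-2sat}); no LP rounding is involved.

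Your accounting also needs fixing. The target $\frac32\min\{h,v\}+\max\{h,v\}\le\frac54k$ is stronger than the theorem, and nothing in your sketch supports it. ``Averaging over the two options'' is not a worst-case bound for a single output. The budget that actually works is $2\min\{k_\mathsf{h},k_\mathsf{v}\}+\frac32\max\{k_\mathsf{h},k_\mathsf{v}\}\le\frac74k$.
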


The algorithm of Theorem~\ref{thm-algorithm} works by first guessing some features of the optimal solution, including the number of horizontal and vertical lines. 
These features are used to greedily select at most $k$ lines into the approximate solution, such that {\em (i)} the remaining rectangles not yet stabbed by the selected lines can be stabbed with only $\frac{3k}{4}$ additional lines, and {\em (ii)} the remaining instance can be solved in polynomial time by reduction to $2$-SAT. 
%\todo{please check this for bullshit! Seems fine to me. -Thomas}
%
We complement our main result by showing that the ratio of $\frac{7}{4}$ of Theorem~\ref{thm-algorithm} cannot be improved to a ratio arbitrarily close to one. 
\begin{theorem} \label{thm-hardness}
    %Assuming FPT $\neq$ W[1], for any constant $\epsilon > 0$ and any function $f$, there does not exist a $(\frac{5}{4}-\epsilon)$-approximation algorithm for \textsc{Rectangle Stabbing} which takes as input an instance $({\cal R}, {\cal L})$, an integer $k$,  runs in time $f(k) \cdot n^{O(1)}$, where $n = |\mathcal{R}| + |\mathcal{L}|$, and either concludes that no subset of at most $k$ lines of ${\cal L}$ stabs ${\cal R}$ or finds a set ${\cal L}^* \subseteq {\cal L}$ of size at most $\left(\frac{5}{4}-\epsilon\right)k$ that stabs ${\cal R}$.
    Assuming that $\textnormal{FPT}\neq\textnormal{W[1]}$, for any constant $\epsilon > 0$ and any function $f$, there does not exist a $(\frac{5}{4}-\epsilon)$-approximation algorithm for \textsc{Rectangle Stabbing} with running time $f(k) \cdot n^{O(1)}$, where $n = |\mathcal{R}| + |\mathcal{L}|$.
\end{theorem}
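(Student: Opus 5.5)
We prove Theorem~\ref{thm-hardness} by a gap-preserving parameterized reduction from a problem that is known to be hard to approximate in FPT time. The most convenient source is \textsc{Multicolored Clique}, or rather its gap version: under FPT $\neq$ W[1] there is no FPT algorithm that distinguishes graphs having a $k$-clique from graphs in which every set of $k$ vertices spans few edges. A cleaner alternative source is the W[1]-hardness of approximating $k$-\textsc{SetCover} or \textsc{Densest-$k$-Subgraph}-type problems. My first choice, however, is a gap version of \textsc{Multicolored Clique} (or a 2-CSP with $k$ variables, alphabet $[n]$, and a constant soundness gap in the number of variables that can be consistently assigned), obtained for example from the parameterized inapproximability of \textsc{Max 2-CSP} / \textsc{Max $k$-Clique}. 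We then follow the architecture of the exact W[1]-hardness proofs of Dom et al.\ and Giannopoulos et al.

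\textbf{Gadget design.} Each CSP variable $x_i$ gets a block of vertical lines, one per value in $[n]$, and a corresponding block of horizontal lines. Small rectangles inside each block force at least one vertical and one horizontal line to be chosen there. So $2k$ lines is the forced baseline, and choosing exactly one of each encodes a value. Consistency between the vertical and horizontal choice of the same variable is enforced by \emph{staircase} rectangles: each such rectangle spans vertical values $\le a$ and horizontal values $\ge a$ (and symmetrically). Each constraint between $x_i$ and $x_j$ is encoded by rectangles placed at the crossing of the vertical block of $x_i$ with the horizontal block of $x_j$. There is one rectangle per non-allowed pair, built so that it is stabbed by the selected lines exactly when the chosen pair is allowed. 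This is the standard trick of representing forbidden pairs $(a,b)$ by rectangles stabbed by any line except those at coordinates $a$ and $b$, realised via two rectangles spanning complementary intervals. Completeness is immediate: a satisfying assignment yields $2k$ lines.

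\textbf{Soundness, the main obstacle.} Suppose $(\frac54-\epsilon)\cdot 2k$ lines suffice. Then there are at most about $k/2$ extra lines, so at least roughly half of the variables have exactly one line in each of their blocks. Those variables decode to values, and every constraint among them is satisfied, giving a large consistent partial assignment and contradicting soundness. The difficulty is getting exactly the constant $\frac54$. A variable using extra lines must not be able to "cheat" on too many constraints at once. I would handle this by arguing that any block receiving $\ge 2$ lines can be charged, and that in the gap CSP instance deleting a $\delta$-fraction of variables still leaves an unsatisfiable part. The ratio then comes from balancing how many variables may be cheated per extra line. I expect the $\frac54$ to arise from a source with perfect completeness where soundness says no $(\frac12+\epsilon')$-fraction of the variables is consistent. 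With $2k$ baseline lines and $\frac{k}{2}$ extras, the extras free up to $\frac k2$ variables, leaving $\frac k2$ of them decoded.

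The step I expect to be hardest is checking that a single extra line cannot relieve constraints of more than one variable. For example, a vertical line sitting in a crossing region must not also serve staircase rectangles of another block. I would ensure this by placing blocks for distinct variables at disjoint coordinate ranges and making all constraint rectangles thin in one direction. Finally, the reduction is polynomial with parameter $2k$, so an FPT $(\frac54-\epsilon)$-approximation would decide the gap problem, contradicting FPT $\neq$ W[1].
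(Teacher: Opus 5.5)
Your overall architecture matches the paper's: a gap version of \textsc{Multicolored Clique}, forcing rectangles, staircase consistency, and counting ``clean'' variables. However, the constraint gadget you describe cannot exist, and that gadget is exactly what determines the constant.

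With one vertical block and one horizontal block per variable, take a rectangle placed, as you propose, in the crossing of the vertical block of $x_i$ and the horizontal block of $x_j$. Its projections are intervals $I$ and $J$. So the value pairs $(a',b')$ whose two lines both miss it form $\overline{I}\times\overline{J}$, and each factor is the complement of an interval in $[1,r]$. Suppose this set contains a forbidden pair $(a,b)$ with $1<a<r$. Then $I$ lies entirely left or entirely right of $a$, so the set contains all of $[1,a]\times\{b\}$ or all of $[a,r]\times\{b\}$, and hence typically also allowed pairs. A family of rectangles, including your ``two rectangles spanning complementary intervals'', only takes the union of such sets. So for an arbitrary graph, no set of rectangles is stabbed by one vertical and one horizontal line exactly when the chosen pair is an edge. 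The condition ``$v\neq a$'' is a disjunction of two interval conditions, while a rectangle offers only one interval per axis.

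The missing idea is to duplicate the encoding. The paper gives each color class $i$ two vertical strips $\zeta_v^{2i-2},\zeta_v^{2i-1}$ and two horizontal strips $\zeta_h^{2i-2},\zeta_h^{2i-1}$. The staircases $TL^{x,y}_a,BR^{x,y}_a$ force all four lines to encode the same vertex. A non-edge is encoded by $A^{i,j}_{p,q}=[2ir+p+1,2ir+r+p-1]\times[2jr+q+1,2jr+r+q-1]$. The first vertical copy misses it iff it encodes a value $\le p$, and the second iff it encodes a value $\ge p$; the horizontal copies behave the same way with respect to $q$.

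This changes your accounting. The baseline is $4k$ lines. $\mathcal{R}_F$ consists of $5k$ segments at distinct positions outside the grid, so no strip can be skipped within budget; your ``small rectangles'' need this robustness too. Hence any class whose four strips are not each hit exactly once costs at least $5$ lines. A solution with at most $5k-\epsilon k$ lines therefore leaves at least $\epsilon k$ clean classes. These form a multicolored clique, because $A^{i,j}_{p,q}$ can only be hit by lines in the strips of $i$ and $j$.

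So $\frac54=\frac{4+1}{4}$ comes from the gadget itself, and the standard constant-factor inapproximability of clique suffices. No soundness-$\frac12$ source is needed. In fact, your two-lines-per-variable accounting combined with that standard source would give $\frac32$, which should have been a warning sign.
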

Theorem~\ref{thm-hardness} also holds even if only ${\cal R}$ is given as input and the set ${\cal L}$ is the set of all axis-parallel lines in the plane. 
The reduction of Theorem~\ref{thm-hardness} can be viewed as a more robust (and therefore gap-preserving) version of the original W[1]-hardness reduction for {\sc Rectangle Stabbing} of Giannopolous et al.~\cite{giannopoulos2013fixed}.

In Section~\ref{sec-pre} we define the notions needed for our proofs, and set up required notation. In Section~\ref{sec-algo} we prove Theorem~\ref{thm-algorithm}, while in Section~\ref{sec-hardness} we prove Theorem~\ref{thm-hardness}. We conclude with some open problems and future directions in Section~\ref{sec-conclusion}.

%\todo[inline]{maybe? discuss the variants of the problem where ${\cal L}$ is given vs when ${\cal L}$ is the set of all lines. We show hardness for the easier version (when ${\cal L}$ is all lines) and algorithm for the harder version when ${\cal L}$ is given. DL: i did it at the end of the reduction section.}

\section{Preliminaries} \label{sec-pre}
\subsection{Basic notations.}
We write $\mathbb{N} = \{1,2,\dots\}$ as the set of natural numbers, and write $\mathbb{N}_0 = \mathbb{N} \cup \{0\}$.
For a number $n \in \mathbb{N}$, we write $[n] = \{1,\dots,n\}$.
For two real numbers $a$, $b$ such that $a < b$ we denote by $(a, b)$ the open interval $\{r \in \mathbb{R} ~:~ a < r < b\}$ and by $[a, b]$ the closed interval $\{r \in \mathbb{R} ~:~ a \leq r \leq b\}$. We also define $[a, b]_\mathbb{Z} = [a, b] \cap \mathbb{Z}$.

\subsection{Strips and lines.}
Let $x$ and $y$ be real numbers. We denote by $\ell_h^y$ the \textit{horizontal line} $\mathbb{R} \times \{y\}$. 
Similarly, we denote by $\ell_v^x$ the vertical line $\{x\} \times \mathbb{R}$.
A horizontal (resp., vertical) \textit{strip}, is a region in $\mathbb{R}^2$ of the form $\mathbb{R} \times (y^-,y^+)$ (resp., $(x^-,x^+) \times \mathbb{R}$), where $y^-,y^+ \in \mathbb{R}$ (resp., $x^-,x^+ \in \mathbb{R}$) and $y^- < y^+$ (resp., $x^- < x^+$).
A \textit{strip} refers to a horizontal or vertical strip.
Two horizontal (resp., vertical) strips $P,P'$ are \textit{separated} by a horizontal (resp., vertical) line $\ell$ if $P$ and $P'$ are on opposite sides of $\ell$.
%A set $\varGamma$ of of horizontal (resp., vertical) strips are \textit{separated} by a set $\mathcal{L}$ of horizontal (resp., vertical) lines if any two (different) strips in $\varGamma$ are separated by at least one line in $\mathcal{L}$.
Let $\varGamma$ be a set of disjoint horizontal (resp., vertical) strips and $\mathcal{L}$ be a set of horizontal (resp., vertical) lines.
We say $\varGamma$ is \textit{separated} by $\mathcal{L}$ if (i) every pair of two (different) strips in $\varGamma$ are separated by at least one line in $\mathcal{L}$ and (ii) $P \cap \ell = \emptyset$ for all $P \in \varGamma$ and all $\ell \in \mathcal{L}$.
We say $\mathcal{L}$ is \textit{$\varGamma$-structured} if there exists a bijective function $f: \mathcal{L} \rightarrow \varGamma$ such that $\ell \subseteq f(\ell)$ for all $\ell \in \mathcal{L}$.
In other words, $\mathcal{L}$ is $\varGamma$-structured if every $P \in \varGamma$ contains exactly one line in $\mathcal{L}$ and these are the only lines in $\mathcal{L}$.
A set $\mathcal{L}$ of horizontal (resp., vertical) lines cut the plane into $|\mathcal{L}|+1$ horizontal (resp., vertical) strips; we use $\varGamma(\mathcal{L})$ to denote the set of these strips.
Formally, $\varGamma(\mathcal{L})$ consists of the $|\mathcal{L}|+1$ connected components of $\mathbb{R}^2 \backslash (\bigcup_{\ell \in \mathcal{L}} \ell)$.

%A horizontal (resp., vertical) \textit{closed strip} is a region in $\mathbb{R}^2$ of the form $\mathbb{R} \times [y^-,y^+]$ (resp., $[x^-,x^+] \times \mathbb{R}$), where $y^-,y^+ \in \mathbb{R}$ (resp., $x^-,x^+ \in \mathbb{R}$) and $y^- \leq y^+$ (resp., $x^- \leq x^+$). 

\subsection{Rectangles.}
All rectangles considered in this paper are axis-parallel.
Specifically, we define a rectangle as a region in $\mathbb{R}^2$ of the form $[a,b] \times [c,d]$ for $a,b,c,d \in \mathbb{R}$ such that $a \leq b$ and $c \leq d$.
We remark that all of our results easily extend to the case where all rectangles are nondegenerate, i.e. when line segments are excluded. 
A rectangle $R$ is \textit{stabbed} by a line $\ell$ if $R \cap \ell \neq \emptyset$, and is \textit{stabbed} by a set $\mathcal{L}$ of lines if it is stabbed by some $\ell \in \mathcal{L}$.
A set $\mathcal{R}$ of rectangles is \textit{stabbed} by a set $\mathcal{L}$ of lines if every $R \in \mathcal{R}$ is stabbed by $\mathcal{L}$.
For a set $\mathcal{R}$ of rectangles and a set $\mathcal{L}$ of lines, we denote by $\mathsf{opt}(\mathcal{R},\mathcal{L})$ the minimum size of a subset of $\mathcal{L}$ that stabs $\mathcal{R}$; if such a subset does not exist (i.e., $\mathcal{R}$ is not stabbed by $\mathcal{L}$), we simply define $\mathsf{opt}(\mathcal{R},\mathcal{L}) = \infty$.

\subsection{One-Dimensional Greedy Stabbing.} In the classic interval-stabbing (hitting-set) problem on the real line, we are given a collection of intervals and points, and we aim to choose the fewest points so that each interval is stabbed by at least one point.
%The optimal greedy strategy repeatedly picks the rightmost endpoint that still covers every remaining segment whose left endpoint lies to its left.
%After placing a stabbing point there, delete all segments intersecting that point, and repeat until no segments remain.
%\todo[inline]{can this be simplified to: pick the leftmost right endpoint, delete all stabbed segments, and repeat until no segments remain? We also use a subroutine \textsc{Stab} in the algorithm section; we could instead introduce that here.}
An optimal greedy strategy selects the rightmost point such that no interval lies entirely to its left, deletes all stabbed intervals, and repeats until no intervals remain.
In a {\sc Rectangle Stabbing} instance with only horizontal lines (resp. vertical lines), the same algorithm applies after projecting all rectangles onto segments on the $y$-axis (resp. $x$-axis), and projecting all lines to points on the $y$-axis (resp. $x$-axis).
Clearly this algorithm runs in polynomial time, and we will frequently use it as the subroutine \textsc{Stab}.

\section{Parameterized Approximation Algorithm} \label{sec-algo}
Consider a \textsc{Rectangle Stabbing} instance $(\mathcal{R},\mathcal{L})$ with parameter $k$.
Our goal is to compute a solution for $(\mathcal{R},\mathcal{L})$ of size at most $\frac{7}{4} k$, assuming there is a solution for $(\mathcal{R},\mathcal{L})$ of size at most $k$.
Suppose $\mathcal{L} = \mathcal{H} \cup \mathcal{V}$ where $\mathcal{H}$ (resp., $\mathcal{V}$) consists of the horizontal (resp., vertical) lines in $\mathcal{L}$.

Let $\mathcal{L}^* \subseteq \mathcal{L}$ be a (unknown) solution for $(\mathcal{R},\mathcal{L})$ with $|\mathcal{L}^*| \leq k$.
Define $\mathcal{H}^* = \mathcal{L}^* \cap \mathcal{H}$ and $\mathcal{V}^* = \mathcal{L}^* \cap \mathcal{V}$.
Suppose $k_\mathsf{h} = |\mathcal{H}^*|$ and $k_\mathsf{v} = |\mathcal{V}^*|$.
We first guess the numbers $k_\mathsf{h}$ and $k_\mathsf{v}$.
Without loss of generality, we assume that $k_\mathsf{h} \leq k_\mathsf{v}$.
We shall compute a solution of size at most $2k_\mathsf{h} + \frac{3}{2}k_\mathsf{v}$, which is sufficient since $2k_\mathsf{h} + \frac{3}{2}k_\mathsf{v} \leq \frac{7}{4} (k_\mathsf{h} + k_\mathsf{v}) \leq \frac{7}{4} k$.

Our algorithm consists of multiple steps.
During the algorithm, we shall generate various subsets $\mathcal{H}_0,\mathcal{H}_1,\mathcal{H}_1',\mathcal{H}_2 \subseteq \mathcal{H}$ and $\mathcal{V}_0,\mathcal{V}_1,\mathcal{V}_2 \subseteq \mathcal{V}$.
The sets $\mathcal{H}_1,\mathcal{H}_1',\mathcal{H}_2$ and $\mathcal{V}_1,\mathcal{V}_2$ together form our solution, while $\mathcal{H}_0$ and $\mathcal{V}_0$ are only for auxiliary uses.
Below we describe the algorithm step by step.

\subsection{Stabbing $\mathcal{R}$ using $k_\mathsf{h} + O(k^2)$ lines}
%\subsection{Greedy Horizontal Line Preselection}
Our first step is to compute $k_\mathsf{h}$ horizontal lines and $O(k^2)$ vertical lines that together stab $\mathcal{R}$ and satisfy certain nice properties.
The horizontal lines that we select in this step are denoted by ${\cal H}_1$ and will be added to the final output,
while the vertical lines selected are denoted by ${\cal V}_0$, and used as a candidate set for future selection processes.

Consider a subset $\mathcal{H}' \subseteq \mathcal{H}$ and let $y_1,\dots,y_r \in \mathbb{R}$ be the $y$-coordinates of the horizontal lines in $\mathcal{H}'$ where $y_1< \cdots < y_r$.
For convenience, set $y_0 = -\infty$.
We say $\mathcal{H}'$ is \textit{nicely positioned} with respect to another subset $\mathcal{H}'' \subseteq \mathcal{H}$ if for every $i \in [r]$, there exists a line in $\mathcal{H}''$ that is contained in the strip $\mathbb{R} \times (y_{i-1},y_i]$.
Observe that if $\mathcal{H}'$ is nicely positioned with respect to $\mathcal{H}''$, then $|\mathcal{H}'| \leq |\mathcal{H}''|$.
The main goal of the first step is given in the following lemma.

\begin{lemma} \label{lem-sweep}
    One can compute in $(|\mathcal{R}|+|\mathcal{L}|)^{O(1)}$ time $\mathcal{H}_1 \subseteq \mathcal{H}$ and $\mathcal{V}_0 \subseteq \mathcal{V}$ satisfying the following.
    \begin{enumerate}[(i)]
        \item $\mathcal{H}_1$ is nicely positioned with respect to $\mathcal{H}^*$.
        In particular, $|\mathcal{H}_1| \leq k_\mathsf{h}$.
        \item $|\mathcal{V}_0| = O(k^2)$.
        \item $\mathcal{R}$ is stabbed by $\mathcal{H}_1 \cup \mathcal{V}_0$.
    \end{enumerate}
    %$|\mathcal{H}| \leq k_\mathsf{h}$, $|\mathcal{V}| \leq k_\mathsf{h} k_\mathsf{v}$, and $\mathcal{R}$ is stabbed by $\mathcal{H} \cup \mathcal{V}$.
\end{lemma}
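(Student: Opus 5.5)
The plan is a bottom-to-top greedy sweep over the horizontal lines. It uses the guessed value $k_\mathsf{v}$ and the one-dimensional subroutine \textsc{Stab} as a feasibility oracle. Set $y_0=-\infty$ and $\mathcal{V}_0=\emptyset$. For a threshold $y$, let $\mathcal{R}(y_{i-1},y)$ be the set of rectangles whose $y$-projection lies entirely inside the open interval $(y_{i-1},y)$; these are exactly the rectangles that no horizontal line with coordinate in $\{y_{i-1},y\}$, and none outside that range, can stab.

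In round $i$ the algorithm does the following.
\begin{enumerate}[(1)]
\item Run \textsc{Stab} on $\mathcal{R}(y_{i-1},+\infty)$ using only vertical lines of $\mathcal{V}$. If it returns at most $k_\mathsf{v}$ lines, add them to $\mathcal{V}_0$ and stop.
\item Otherwise, let $y_i$ be the largest $y$-coordinate of a line of $\mathcal{H}$ with $y_i>y_{i-1}$ such that \textsc{Stab} stabs $\mathcal{R}(y_{i-1},y_i)$ with at most $k_\mathsf{v}$ lines of $\mathcal{V}$.
\item Add that line to $\mathcal{H}_1$, add the $\le k_\mathsf{v}$ vertical lines to $\mathcal{V}_0$, and continue with round $i+1$.
\end{enumerate}
If no candidate exists in step (2), or the number of rounds exceeds $k_\mathsf{h}+1$, the guess is wrong and we reject. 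Each round tries $O(|\mathcal{L}|)$ candidates, each checked by one polynomial call to \textsc{Stab}, so the running time is polynomial.

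Property (iii) is immediate. Any rectangle not touched by some line $\ell_h^{y_i}$ has $y$-projection strictly inside some $(y_{i-1},y_i)$ or strictly above the last $y_i$, and the vertical lines added in that round stab it. Property (ii) follows once we bound the number of rounds by $k_\mathsf{h}+1$: then $|\mathcal{V}_0|\le (k_\mathsf{h}+1)k_\mathsf{v}=O(k^2)$.

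The main step is (i), together with showing that the algorithm never rejects under a correct guess. Let $h^*_1<\dots<h^*_{k_\mathsf{h}}$ be the $y$-coordinates of the lines of $\mathcal{H}^*$.

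Suppose round $i$ reaches step (2). Then some $h^*_j>y_{i-1}$ exists. Otherwise every rectangle strictly above $y_{i-1}$ avoids all of $\mathcal{H}^*$, so $\mathcal{V}^*$ stabs all of them. Since \textsc{Stab} is optimal, it would use at most $k_\mathsf{v}$ lines and the algorithm would have stopped in step (1).

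Take the smallest such $j$, so $h^*_{j-1}\le y_{i-1}<h^*_j$. Every rectangle in $\mathcal{R}(y_{i-1},h^*_j)$ has $y$-projection strictly between two consecutive lines of $\mathcal{H}^*$. Hence it is stabbed by $\mathcal{V}^*$, and $\mathcal{V}^*$ stabs this set with $k_\mathsf{v}$ lines. So the line at $h^*_j$ is a valid candidate, and maximality gives $y_i\ge h^*_j$. This means the strip $\mathbb{R}\times(y_{i-1},y_i]$ contains a line of $\mathcal{H}^*$, which is the nicely-positioned condition.

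These strips are disjoint, so the rounds that reach step (2) are at most $k_\mathsf{h}$. This bounds $|\mathcal{H}_1|\le k_\mathsf{h}$ and the total number of rounds by $k_\mathsf{h}+1$.

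The delicate point is the choice of greedy rule. Choosing the highest feasible line, rather than forcing an alignment with $h^*_i$, is what ensures that every strip contains an optimal line. The explicit stopping test in step (1) handles the case where all remaining optimal horizontal lines have already been passed.
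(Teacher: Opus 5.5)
Your proposal is correct and follows essentially the same greedy sweep as the paper. Your step (1) plays the role of the paper's sentinel line $h_{m+1}$, and accumulating the per-round \textsc{Stab} outputs into $\mathcal{V}_0$ gives the same $(k_\mathsf{h}+1)k_\mathsf{v}$ bound as the paper's single final \textsc{Stab} call. For (i) you argue that the first line of $\mathcal{H}^*$ above $y_{i-1}$ is a feasible candidate and then use maximality, whereas the paper uses the infeasibility of the line just above the chosen one to extract a rectangle that $\mathcal{H}^*$ must stab inside $(y_{i-1},y_i]$; these are two forms of the same greedy-stays-ahead argument, and yours also makes explicit that a candidate always exists under a correct guess.
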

\begin{proof}
For two horizontal lines $h$ and $h'$, we define $\mathcal{R}(h,h') \subseteq \mathcal{R}$ as the subset consisting of all rectangles that are contained in the open strip in between $h$ and $h'$.
Suppose $\mathcal{H} = \{h_1, \dots,h_m\}$ where $h_1, \dots,h_m$ are sorted in ascending order by their y-coordinates.
For convenience, let $h_0$ (resp., $h_{m+1}$) be a horizontal line below (resp., above) all rectangles in $\mathcal{R}$.

The algorithm for computing $\mathcal{H}_1$ and $\mathcal{V}_0$ is presented in Algorithm~\ref{alg:linesweep}.
It uses the one-dimensional greedy stabbing algorithm $\textsc{Stab}({\cal R}, {\cal V})$ where we only use vertical lines in ${\cal V}$ to stab rectangles in ${\cal R}$ as a subroutine.
Starting from $h_0$, repeatedly choose the furthest line $h_{j^*}$ so that all rectangles between the current line and $h_{j^*}$ can be stabbed by at most $k_{\mathsf v}$ vertical lines; add $h_{j^*}$ to~$\mathcal H_1$ and continue from there.  Once no more $h_j$ remain, let $\mathcal R'$ be the rectangles not yet stabbed, and compute $\mathcal V_0=\textsc{Stab}(\mathcal R',\mathcal V)$ by the usual greedy process.

\begin{algorithm}[H]
        \caption{Greedy Horizontal Line Preselection}
        %\caption{\textsc{LineSweep}$(\mathcal{R,H} = \{h_1, \cdots, h_{|\mathcal{H}|}\},\mathcal{V})$}
        \begin{algorithmic}[1]
            \State $\mathcal{H}_1 \leftarrow \emptyset$
            \State $i \leftarrow 0$
            \While{$i \leq m$}
                \State $j^* \leftarrow \max\{j \in [m+1] \backslash [i]: \abs{\textsc{Stab}(\mathcal{R}(h_i,h_j),\mathcal{V})} \leq k_\mathsf{v}\}$
                \If{$j^* \leq m$}{ $\mathcal{H}_1 \leftarrow \mathcal{H}_1 \cup \{h_{j^*}\}$}
                \EndIf
                \State $i \leftarrow j^*$
            \EndWhile
            \State $\mathcal{R}' \leftarrow \{R \in \mathcal{R}: R \text{ is not stabbed by } \mathcal{H}_1\}$
            \State $\mathcal{V}_0 \leftarrow \textsc{Stab}(\mathcal{R}',\mathcal{V})$
            \State \textbf{return} $\mathcal{H}_1$ and $\mathcal{V}_0$
        \end{algorithmic}
    \label{alg:linesweep}
\end{algorithm}

%% Commented By Xiaoyang Xu - Reduce algorithm repeats on unnecessary parts
% We start with $i = 0$ and $\mathcal{H}_1 = \emptyset$.
% In each iteration, as long as $i \leq m$, we shall do the following.
% We first find the largest index $j^* > i$ satisfying that $\mathsf{opt}(\mathcal{R}(h_i,h_{j^*}),\mathcal{V}) \leq k_\mathsf{v}$.
% Note that $j^*$ always exists.
% Indeed, we have $\mathsf{opt}(\mathcal{R}(h_i,h_{i+1}),\mathcal{V}) \leq k_\mathsf{v}$, because all rectangles in $\mathcal{R}(h_i,h_{i+1})$ can only be stabbed by the lines in $\mathcal{V}$ and thus must be stabbed by $\mathcal{V}^*$.
% If $j^* \leq m$, we add $h_{j^*}$ to $\mathcal{H}_1$.
% Then we set $i = j^*$ and proceed to the next iteration.
% This procedure terminates when $i > m$, and it constructs $\mathcal{H}_1$.
% Let $\mathcal{R}' \subseteq \mathcal{R}$ consist of all rectangles that are not stabbed by $\mathcal{H}_1$.
% %Finally, we let $\mathcal{V}_0$ be a minimum subset of $\mathcal{V}$ that stabs all rectangles in $\mathcal{R}$ that are not stabbed by $\mathcal{H}_1$.
% Finally, the sub-routine $\textsc{Stab}(\mathcal{R}',\mathcal{V})$ finds a minimum subset of $\mathcal{V}$ that stabs $\mathcal{R}'$, which is the set $\mathcal{V}_0$ we want.
% As $\mathcal{V}$ only consists of vertical lines, one can implement $\textsc{Stab}(\mathcal{R}',\mathcal{V})$ in polynomial time by a simple greedy algorithm.
%% End Comment Xiaoyang Xu

% \todo[inline]{\makebox[\textwidth]{Decide if we like this placement for the figure}}

% \todo[inline]{\makebox[\textwidth]{Add figure caption}}

\begin{figure}[H]
    \centering
    \includegraphics[width=0.47\linewidth]{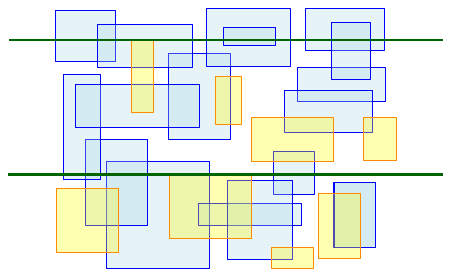}
    \hspace{0.04\linewidth}
    \includegraphics[width=0.47\linewidth]{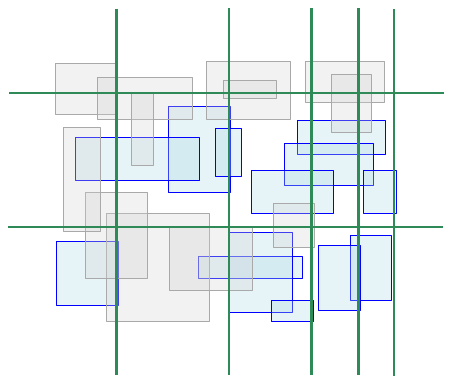}
    \caption{\textit{The set of horizontal lines $\mathcal{H}_1$ and vertical lines $\mathcal{V}_0$ computed in Lemma \ref{lem-sweep}.} On the left, sweeping upwards, we repeatedly place a horizontal line just before the number of vertically disjoint rectangles would exceed $k$. Here $k = 3$. On the right, the remaining rectangles (highlighted) after placing $\mathcal{H}_1$ are shown in blue, and stabbed by $\mathcal{V}_0$.}
    \label{fig:algorithm_1}
\end{figure}

It suffices to verify that $\mathcal{H}_1$ and $\mathcal{V}_0$ satisfy the three conditions.
Condition (iii) is clear from our construction.
To see condition (i), suppose $\mathcal{H}_1 = \{h_{i_1},\dots,h_{i_p}\}$ where $i_1< \cdots < i_p$.
Let $i_0 = 0$ and $i_{p+1} = m+1$.
We need to show that $\mathcal{H}^* \cap \{h_{i_{t-1} +1},\dots,h_{i_t}\} \neq \emptyset$ for all $t \in [p]$.
By construction, we have $\mathsf{opt}(\mathcal{R}(h_{i_{t-1}},h_{i_t +1}),\mathcal{V}) > k_\mathsf{v}$ and thus there exists $R \in \mathcal{R}(h_{i_{t-1}},h_{i_t +1})$ that is not stabbed by $\mathcal{V}^*$.
So $R$ must be stabbed by $\mathcal{H}^*$, which implies that $\mathcal{H}^*$ contains one of the lines $h_{i_{t-1} +1},\dots,h_{i_t}$.
To see condition (ii), observe that $\mathcal{R}' = \bigcup_{t=1}^{p+1} \mathcal{R}(h_{i_{t-1}},h_{i_t})$.
Since $\mathsf{opt}(\mathcal{R}(h_{i_{t-1}},h_{i_t}),\mathcal{V}) \leq k_\mathsf{v}$ for all $t \in [p+1]$ by our construction, $|\mathcal{V}_0| = \mathsf{opt}(\mathcal{R}',\mathcal{V}) \leq \sum_{t=1}^{p+1} \mathsf{opt}(\mathcal{R}(h_{i_{t-1}},h_{i_t}),\mathcal{V}) \leq k_\mathsf{v}(p+1)$.
As $p = |\mathcal{H}_1| \leq k_\mathsf{h}$, we have $|\mathcal{V}_0| = O(k^2)$.
\end{proof}

\subsection{Finding good vertical strips} \label{sec-vstrip}

We now have the sets $\mathcal{H}_1$ and $\mathcal{V}_0$ satisfying the conditions in Lemma~\ref{lem-sweep}.
The next step aims to find a set $\varGamma_\mathsf{v}\subseteq \varGamma(\mathcal{V}_0)$ of vertical strips together with a subset of vertical lines $\mathcal{V}_1 \subseteq \mathcal{V}_0$ that separates $\varGamma_\mathsf{v}$.
The main property we want is that each strip in $\varGamma_\mathsf{v}$ contains exactly one line in the optimal solution $\mathcal{V}^*$,
and these lines together with $\mathcal{H}_1 \cup \mathcal{V}_1 \cup \mathcal{H}^*$ stab $\mathcal{R}$.
Specifically, we prove the following lemma.

\begin{lemma} \label{lem-vstrip}
    There exist $\varGamma_\mathsf{v} \subseteq \varGamma(\mathcal{V}_0)$ and $\mathcal{V}_1 \subseteq \mathcal{V}_0$ satisfying the following.
    \begin{enumerate}[(i)]
        \item $|\varGamma_\mathsf{v}| + |\mathcal{V}_1| \leq \frac{3}{2} k_\mathsf{v}$.
        \item $\varGamma_\mathsf{v}$ is separated by $\mathcal{V}_1$.
        \item For each $P \in \varGamma_\mathsf{v}$, there exists exactly one line $v^*_P \in \mathcal{V}^*$ with $v_P^* \subseteq P$.
        \item $\mathcal{R}$ is stabbed by $\mathcal{H}_1 \cup \mathcal{V}_1 \cup \mathcal{H}^* \cup \{v^*_P: P \in \varGamma_\mathsf{v}\}$.
    \end{enumerate}
\end{lemma}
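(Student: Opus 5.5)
Since $\mathcal{V}_0$ is an arbitrary set of vertical lines, the strips of $\varGamma(\mathcal{V}_0)$ partition the plane minus $\bigcup\mathcal{V}_0$. The plan is to group the lines of $\mathcal{V}^*$ by the strips that contain them. Lines of $\mathcal{V}^*$ lying exactly on a line of $\mathcal{V}_0$ we simply put into $\mathcal{V}_1$ (they belong to $\mathcal{V}_0$), at cost one each. For each strip $P\in\varGamma(\mathcal{V}_0)$ let $c_P=|\{v\in\mathcal{V}^*: v\subseteq P\}|$. Strips with $c_P=0$ are ignored. Strips with $c_P=1$ are candidates for $\varGamma_\mathsf{v}$. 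Strips with $c_P\ge 2$ are handled by replacing their $\mathcal{V}^*$-lines with the two boundary lines of $P$, which lie in $\mathcal{V}_0$ (or do not exist if $P$ is unbounded). The key observation making this replacement valid is the following. Consider a rectangle $R$ not stabbed by $\mathcal{H}_1\cup\mathcal{H}^*$. By Lemma~\ref{lem-sweep}(iii) it is stabbed by some line of $\mathcal{V}_0$, so its $x$-projection meets a line of $\mathcal{V}_0$. Hence if $R$ is stabbed by some $v^*\subseteq P$, then $R$ crosses the left or right boundary of $P$ as well. So the two boundary lines of $P$ stab everything that the $\mathcal{V}^*$-lines inside $P$ were needed for.

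Next we fix the choice of $\mathcal{V}_1$ and $\varGamma_\mathsf{v}$. Take the strips with $c_P=1$ and add them to $\varGamma_\mathsf{v}$. Take the strips with $c_P\ge2$ and add their boundary lines to $\mathcal{V}_1$. Then we must ensure separation, condition (ii), and property (iii), while achieving condition (iv). Conditions (iii) and (iv) follow from the observation above and from the definition of $c_P$. Separation requires, for each pair of consecutive strips in $\varGamma_\mathsf{v}$, a line of $\mathcal{V}_1$ between them. The approach is to exploit the freedom for $c_P=1$ strips. For a single $v^*$ in a strip we may either keep $P$ in $\varGamma_\mathsf{v}$ (cost $1$), or charge it by adding one boundary line of $P$ to $\mathcal{V}_1$ instead. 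The second option is invalid in general, since a rectangle might need the right boundary while another needs the left. We therefore keep the strip in $\varGamma_\mathsf{v}$ and add separators.

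The budget analysis is the main obstacle. Let $a$ be the number of strips with $c_P=1$, let $b$ be the number of $\mathcal{V}^*$-lines in strips with $c_P\ge2$, and let $s$ be the number of $\mathcal{V}^*$-lines lying on $\mathcal{V}_0$, so that $a+b+s=k_\mathsf{v}$. Strips with $c_P\ge2$ cost at most $2\le c_P$ lines. For the $c_P=1$ strips, order them left to right. We add to $\mathcal{V}_1$ one separating line of $\mathcal{V}_0$ between each consecutive pair that is not already separated by a chosen line. This costs at most $a-1$ extra lines, giving a total of about $2a$, which is too much.

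To get $\frac32$, I would do a pairing argument. Group the consecutive $c_P=1$ strips into pairs. For each pair, either insert one separator between them, or drop both strips from $\varGamma_\mathsf{v}$ and instead add the outer boundaries. More carefully, for a pair of adjacent singleton strips $P,P'$ we have two options. Option 1 keeps both strips plus one separator between them, at cost $3$. Option 2 replaces the pair by the three lines bounding them (left of $P$, one between, right of $P'$), at cost $3$; this is also valid by the observation. Either way the pair costs $3 = \frac32\cdot 2$. The separator between consecutive pairs must also be accounted for. The intended trick is to alternate: the boundary lines chosen in one group serve as the separators for neighboring kept strips. Concretely, in each block of two consecutive singleton strips we keep one strip and replace the other by its two boundary lines. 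Those boundary lines then separate the kept strip from its neighbors, at cost $1+2=3$ per two $\mathcal{V}^*$-lines. An odd leftover strip costs at most $1$ plus boundaries already accounted for. Checking that separation holds across blocks, and across strips with $c_P\ge 2$ (whose boundaries are already in $\mathcal{V}_1$), together with the edge cases of unbounded strips, completes (i)--(iv).
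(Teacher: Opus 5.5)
Your final alternating construction matches the paper's proof. You keep the odd-indexed light strips in $\varGamma_\mathsf{v}$, replace each even-indexed light strip and each heavy strip by its boundary lines, and put $\mathcal{V}^*\cap\mathcal{V}_0$ into $\mathcal{V}_1$; the boundary-crossing argument for (iv) and the $\frac32$ count are also the paper's. The kept strip must sit in the same position in every block, as your word ``alternate'' implies, since otherwise two consecutive kept strips can have no replaced strip between them and separation fails; the Option 1/Option 2 detour can simply be dropped.
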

\begin{proof}
For each strip $P \in \varGamma(\mathcal{V}_0)$, let $\partial P \subseteq \mathcal{V}_0$ denote the (at most) two vertical lines that bound $P$.
We say $P$ is \textit{light} if $P$ contains exactly one line in $\mathcal{V}^*$ and is \textit{heavy} if $P$ contains at least two lines in $\mathcal{V}^*$.
Let $\varGamma' \subseteq \varGamma(\mathcal{V}_0)$ (resp., $\varGamma'' \subseteq \varGamma(\mathcal{V}_0)$) consist of all light (resp., heavy) strips.
Suppose $\varGamma' = \{P_1,\dots,P_r\}$ where $P_1,\dots,P_r$ are sorted from left to right.
Write $\varGamma_0' = \{P_i: i \in [r] \text{ is even}\}$ and $\varGamma_1' = \{P_i: i \in [r] \text{ is odd}\}$.
Then we define $\varGamma_\mathsf{v} = \varGamma_1'$ and $\mathcal{V}_1 = (\mathcal{V}_0 \cap \mathcal{V^*}) \cup (\bigcup_{P \in \varGamma_0'} \partial P) \cup (\bigcup_{P \in \varGamma''} \partial P)$.

\begin{figure}[h]
    \centering
    \includegraphics[width=0.47\linewidth]{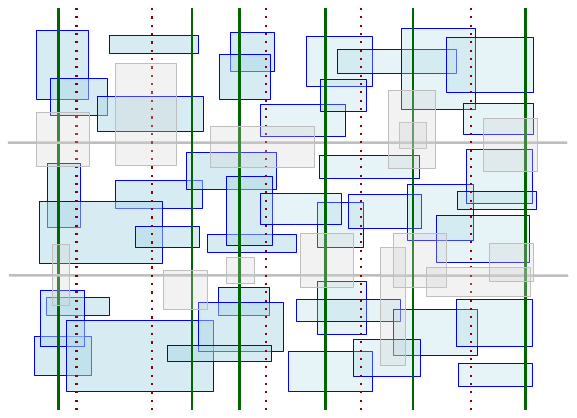}
    \hspace{0.04\linewidth}
    \includegraphics[width=0.47\linewidth]{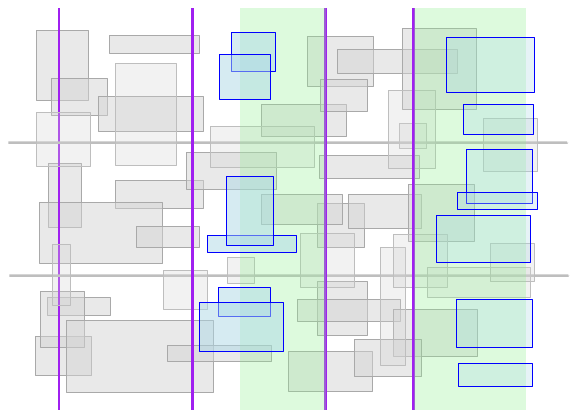}
    \caption{
    \textit{The vertical lines $\mathcal{V}_1$ and strips $\Gamma_v$ guaranteed by Lemma \ref{lem-vstrip}.}
    On the left, the vertical dashed lines represent the unknown lines of OPT, $\mathcal{V}^*$, while the solid green are those of $\mathcal{V}_0$.
    On the right, the strips of $\Gamma_v$ are highlighted green.
    The vertical lines of $\mathcal{V}_1$ in purple include the boundaries of all heavy strips, and separate the strips of $\Gamma_v$.
    }
\end{figure}

We now verify the four conditions in the lemma.
Conditions (ii) and (iii) follow readily from the construction.
To see (iii), observe that $\varGamma_\mathsf{v}$ only contains strips in $\varGamma'$ (i.e., light strips), each of which contains exactly one line in $\mathcal{V}^*$ by definition.
For (ii), consider a pair of strips $P_i, P_j \in \varGamma_\mathsf{v}$ where $i < j$.
By construction, both $i$ and $j$ are odd, which implies $j \geq i+2$.
Since $i+1$ is even, we have $P_{i+1} \in \varGamma_0'$ and thus $\partial P_{i+1} \subseteq \mathcal{V}_1$.
The lines in $\partial P_{i+1}$ then separate $P_i$ and $P_j$.

For condition (iv), let $\mathcal{R}' \subseteq \mathcal{R}$ consist of rectangles not stabbed by $\mathcal{H}_1 \cup \mathcal{H}^*$.
It suffices to show that every $R \in \mathcal{R}'$ is stabbed by $\mathcal{V}_1 \cup \{v^*_P: P \in \varGamma_\mathsf{v}\}$.
%To prove that $\mathcal{H}_1 \cup \mathcal{V}_1 \cup \mathcal{H}^* \cup \{v^*_P: P \in \varGamma_\mathsf{v}\}$ stabs $\mathcal{R}$, it suffices to show if a rectangle $r \in \mathcal{R}'$ is stabbed by some $v \in V^*$, then it is also stabbed by $\mathcal{V}_1 \cup \{v^*_P: P \in \varGamma_\mathsf{v}\}$.
Since $R$ is not stabbed by $\mathcal{H}^*$, it must be stabbed by some $v \in \mathcal{V}^*$.
If $v \in \mathcal{V}_0$, then $v \in \mathcal{V}_0 \cap \mathcal{V^*} \subseteq \mathcal{V}_1$ and thus $\mathcal{V}_1$ stabs $R$.
Otherwise, $v$ is contained in some strip $P \in \varGamma(\mathcal{V}_0)$.
Then either $P \in \varGamma_1' = \varGamma_\mathsf{v}$ or $P \in \varGamma_0' \cup \varGamma''$.
If $P \in \varGamma_\mathsf{v}$, the $v_P^* = v$ stabs $R$.
If $P \in \varGamma_0' \cup \varGamma''$, then $\partial P \subseteq \mathcal{V}_1$.
Note that at least one line in $\partial P$ stabs $R$, because $R \cap P \neq \emptyset$ and $\mathcal{V}_0$ stabs $R$.

Finally, we verify condition (i), where we need $|\varGamma_\mathsf{v}| + |\mathcal{V}_1| \leq \frac{3}{2} k_\mathsf{v}$.
We have $|\varGamma_\mathsf{v}| = |\varGamma_1'|$ and  $|\mathcal{V}_1| \leq |\mathcal{V}_0 \cap \mathcal{V^*}| + 2(|\varGamma_0'|+|\varGamma''|)$.
Note that $|\mathcal{V}_0 \cap \mathcal{V^*}| + |\varGamma'| + 2|\varGamma''| \leq |\mathcal{V^*}|$.
Therefore,
\begin{equation*}
    |\varGamma_\mathsf{v}| + |\mathcal{V}_1| \leq |\varGamma_1'| + |\mathcal{V}_0 \cap \mathcal{V^*}| + 2(|\varGamma_0'|+|\varGamma''|) = |\mathcal{V}_0 \cap \mathcal{V^*}| + |\varGamma'| + 2|\varGamma''| + |\varGamma_0'| \leq |\mathcal{V^*}| + |\varGamma_0'|.
\end{equation*}
Note that $|\varGamma_0'| \leq |\varGamma_1'|$ and thus $|\varGamma_0'| \leq \frac{1}{2} |\varGamma'| \leq \frac{1}{2} |\mathcal{V^*}|$.
So we have $|\varGamma_\mathsf{v}| + |\mathcal{V}_1| \leq \frac{3}{2} |\mathcal{V^*}| = \frac{3}{2} k_\mathsf{v}$.
\end{proof}

Since $\mathcal{H}^*$ and $\mathcal{V}^*$ are unknown to us, we cannot compute the sets $\varGamma_\mathsf{v}$ and $\mathcal{V}_1$ in the above lemma.
However, we can afford guessing them.
We have $|\varGamma(\mathcal{V}_0)| = |\mathcal{V}_0|+1 = O(k^2)$ by (ii) of Lemma~\ref{lem-sweep} and $|\varGamma_\mathsf{v}| + |\mathcal{V}_1| = O(k)$ by (i) of Lemma~\ref{lem-vstrip}.
Therefore, the total number of guesses for $\varGamma_\mathsf{v}$ and $\mathcal{V}_1$ is bounded by $k^{O(k)}$, and making the guess leads to a $k^{O(k)}$ overhead on the running time.

Note that $\{v^*_P: P \in \varGamma_\mathsf{v}\}$ is $\varGamma_\mathsf{v}$-structured.
The size of the set $\mathcal{H}_1 \cup \mathcal{V}_1 \cup \mathcal{H}^* \cup \{v^*_P: P \in \varGamma_\mathsf{v}\}$ is $|\mathcal{H}_1| + |\mathcal{V}_1| + k_\mathsf{h} + |\varGamma_\mathsf{v}|$, which is at most $2k_\mathsf{h}+\frac{3}{2} k_\mathsf{v}$ by (i) of Lemma~\ref{lem-sweep} and (i) of Lemma~\ref{lem-vstrip}.
Therefore, (iv) of Lemma~\ref{lem-vstrip} implies the existence of a solution of size at most $2k_\mathsf{h}+\frac{3}{2} k_\mathsf{v}$ that consists of $\mathcal{H}_1 \cup \mathcal{V}_1$ together with a subset of $\mathcal{H}$ (i.e., $\mathcal{H}^*$) and a subset of $\mathcal{V}$ that is $\varGamma_\mathsf{v}$-structured.

%Lemma~\ref{lem-vstrip} implies the existence of a $\varGamma_\mathsf{v}$-structured solution, i.e., the set $\mathcal{H}_1 \cup \mathcal{H}^* \cup \mathcal{V}_1 \cup \{v^*_P: P \in \varGamma_\mathsf{v}\}$ in (iv) of Lemma~\ref{lem-vstrip}.
%Indeed, (ii) of Lemma~\ref{lem-vstrip} implies that the lines in $\mathcal{V}_1$ are disjoint from the strips in $\varGamma_\mathsf{v}$.
%Thus, $\mathcal{V}_1 \cup \{v^*_P: P \in \varGamma_\mathsf{v}\}$ is $\varGamma_\mathsf{v}$-structured, so is $\mathcal{H}_1 \cup \mathcal{H}^* \cup \mathcal{V}_1 \cup \{v^*_P: P \in \varGamma_\mathsf{v}\}$.
%This $\varGamma_\mathsf{v}$-structured solution contains $\mathcal{H}_1 \cup \mathcal{V}_1$ and its size is $|\mathcal{H}_1| + |\mathcal{H}^*| + |\mathcal{V}_1| + |\varGamma_\mathsf{v}| \leq 2k_\mathsf{h}+\frac{3}{2} k_\mathsf{v}$, by (i) of Lemma~\ref{lem-sweep} and (i) of Lemma~\ref{lem-vstrip}.
%$|\varGamma(\mathcal{V}_0)| = |\mathcal{V}_0|+1 = O(k^2)$ by Lemma~\ref{lem-sweep}, we can afford guessing the sets  $\varGamma_\mathsf{v}$ and $\mathcal{V}_1$ in the above lemma.

\subsection{Removing redundant rectangles}

In what follows, we shall focus on finding a solution of size at most $2k_\mathsf{h}+\frac{3}{2} k_\mathsf{v}$ that consists of $\mathcal{H}_1 \cup \mathcal{V}_1$ together with a subset of $\mathcal{H}$ and a $\varGamma_\mathsf{v}$-structured subset of $\mathcal{V}$.
As argued at the end of the last section, such a solution exists. 
%The rest of our algorithm aims to compute a $\varGamma_\mathsf{v}$-structured solution containing $\mathcal{H}_1 \cup \mathcal{V}_1$ that is of size at most $2k_\mathsf{h}+\frac{3}{2} k_\mathsf{v}$.
%Such a solution exists by our discussion at the end of the last section.
To find such a solution, we shall first compute a good representative subset $\mathcal{K} \subseteq \mathcal{R}$ such that to solve the problem on $\mathcal{R}$, it suffices to solve the problem on $\mathcal{K}$.
The desired property of $\mathcal{K}$ is that it can be stabbed by $\mathcal{H}_1 \cup \mathcal{V}_1$ together with a set $\mathcal{H}_0 \subseteq \mathcal{H}$ of $O(k^2)$ additional horizontal lines.

\begin{lemma} \label{lem-redundant}
    One can compute in $(|\mathcal{R}|+|\mathcal{L}|)^{O(1)}$ time $\mathcal{K} \subseteq \mathcal{R}$ and $\mathcal{H}_0 \subseteq \mathcal{H}$ satisfying the following.
    \begin{enumerate}[(i)]
        \item $|\mathcal{H}_0| \leq O(k^2)$.
        \item For any $\mathcal{A} \subseteq \mathcal{H}$ with $|\mathcal{A}| \leq 2k$ and any $\varGamma_\mathsf{v}$-structured $\mathcal{B} \subseteq \mathcal{V}$, $\mathcal{R}$ is stabbed by $\mathcal{H}_1 \cup \mathcal{V}_1 \cup \mathcal{A} \cup \mathcal{B}$ if and only if $\mathcal{K}$ is stabbed by $\mathcal{H}_1 \cup \mathcal{V}_1 \cup \mathcal{A} \cup \mathcal{B}$.
        \item For each $R \in \mathcal{K}$, if $R$ is stabbed by $\mathcal{H}$, then $R$ is also stabbed by $\mathcal{H}_1 \cup \mathcal{V}_1 \cup \mathcal{H}_0$.
    \end{enumerate}
\end{lemma}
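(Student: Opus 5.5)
The plan is to discard every rectangle already stabbed by $\mathcal{H}_1 \cup \mathcal{V}_1$, classify the rest by how a $\varGamma_\mathsf{v}$-structured $\mathcal{B}$ can stab them, and in each of $O(k)$ classes keep only enough rectangles to certify, for every choice of the one line $\mathcal{B}$ places in the relevant strip, whether $|\mathcal{A}| \leq 2k$ horizontal lines can finish. Each class will contribute $O(k)$ lines to $\mathcal{H}_0$, giving $O(k^2)$ in total.

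\textbf{Step 1 (structure of the remaining rectangles).} Let $\mathcal{R}^\circ$ be the rectangles not stabbed by $\mathcal{H}_1 \cup \mathcal{V}_1$; the others can be dropped without affecting (ii). Take $R \in \mathcal{R}^\circ$.
\begin{itemize}
\item By Lemma~\ref{lem-vstrip}(ii), $R$ meets at most one strip of $\varGamma_\mathsf{v}$: if it met two, the $\mathcal{V}_1$-line separating them would stab $R$.
\item By Lemma~\ref{lem-sweep}(iii), $R$ is stabbed by some line of $\mathcal{V}_0$. Suppose $R$ meets $P \in \varGamma_\mathsf{v}$. Since $P \in \varGamma(\mathcal{V}_0)$ has no $\mathcal{V}_0$-line in its interior, the $x$-extent of $R$ contains the left or the right boundary of $P$.
\item If it contains both boundaries, every line of $\mathcal{B}$ inside $P$ stabs $R$. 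Then $R$ is irrelevant for every admissible $\mathcal{B}$ and is dropped.
\end{itemize}
So every remaining rectangle is of one of two kinds.
\begin{itemize}
\item \emph{Free} rectangles meet no strip of $\varGamma_\mathsf{v}$. For every admissible $\mathcal{B}$, they must be stabbed by $\mathcal{A}$.
\item \emph{Attached} rectangles are of type $(P,\text{left})$ or $(P,\text{right})$. Such an $R$ is stabbed by the unique line $b_P \in \mathcal{B}$ inside $P$ if and only if $b_P$ lies weakly left of a threshold $t_R$, respectively weakly right of it.
\end{itemize}
There are $O(k)$ classes of attached rectangles, since $|\varGamma_\mathsf{v}| \leq \frac32 k_\mathsf{v}$.

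\textbf{Step 2 (the free class).} Run \textsc{Stab} on the free rectangles with $\mathcal{H}$, restricted to those stabbable by $\mathcal{H}$.
\begin{itemize}
\item If it returns at most $2k$ lines, keep all free rectangles in $\mathcal{K}$ and put the returned lines into $\mathcal{H}_0$.
\item Otherwise the greedy run exhibits $2k+1$ rectangles whose $\mathcal{H}$-stabbing sets are pairwise disjoint. Keep exactly these, and put into $\mathcal{H}_0$ the first $2k+1$ greedy lines, which stab them. Both the original and the reduced instance are then infeasible for every $|\mathcal{A}| \leq 2k$, which is what (ii) needs.
\item Free rectangles not stabbable by $\mathcal{H}$ are kept; by (iii) they need no line in $\mathcal{H}_0$. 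Keeping a single one suffices.
\end{itemize}

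\textbf{Step 3 (an attached class, say $(P,\text{left})$).} Sort the class by threshold. For a candidate line $v \subseteq P$, feasibility requires that all class rectangles with $t_R < v$ be stabbed by $\mathcal{A}$. Let $b$ be the largest candidate such that the rectangles with $t_R < b$ are all stabbable by $\mathcal{H}$ and need at most $2k$ lines, as computed by \textsc{Stab}.
\begin{itemize}
\item Keep all rectangles with $t_R < b$, and add their at most $2k$ greedy lines to $\mathcal{H}_0$.
\item Keep every rectangle with threshold exactly $b$ that is stabbable by $\mathcal{H}$, as these are needed to decide candidates $v > b$. Also keep a witness set certifying that every $v > b$ fails: either one rectangle not stabbable by $\mathcal{H}$, or $2k+1$ greedy-disjoint rectangles together with their greedy lines.
\item Rectangles with larger thresholds are stabbed by every $v \le b$ and discarded. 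Every $v > b$ is already ruled out by the witnesses.
\end{itemize}
Then (ii) follows by case analysis on $b_P$. Any $\mathcal{A}$ that stabs the kept rectangles stabs all discarded ones, because a discarded rectangle is either stabbed by $\mathcal{H}_1 \cup \mathcal{V}_1 \cup \mathcal{B}$ or lies in a configuration already made infeasible by a kept witness.

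\textbf{Main obstacle.} I expect the main difficulty to be making this exchange argument airtight with discrete line positions and ties. Two points need care. First, the rectangles sitting exactly at the threshold boundary $b$ must also have their stabbing lines included in $\mathcal{H}_0$; this may need up to $2k+1$ extra lines per class. Second, the witness packings must genuinely certify infeasibility for every $|\mathcal{A}| \leq 2k$ simultaneously. Each class adds $O(k)$ lines and there are $O(k)$ classes, so $|\mathcal{H}_0| = O(k^2)$, giving (i). Property (iii) holds because every kept rectangle stabbable by $\mathcal{H}$ is stabbed by the greedy lines recorded in $\mathcal{H}_0$. All steps are sorting and greedy calls, so the running time is polynomial.
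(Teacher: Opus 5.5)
Your approach differs from the paper's and works, except for one step in Step~3 that, as written, breaks (i).

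\textbf{The faulty step.} You keep \emph{every} $\mathcal{H}$-stabbable rectangle of the class whose threshold is exactly $b$, and you claim this costs at most $2k+1$ extra lines. Nothing bounds that cost. There can be arbitrarily many rectangles of type $(P,\text{left})$ with right end exactly at $b$ such that no line of $\mathcal{H}$ stabs two of them. This can happen even when $\{R : t_R < b\} = \emptyset$. Then $b$ is still the largest good candidate, and (iii) forces $\mathcal{H}_0$ to contain one line per such rectangle, so $|\mathcal{H}_0|$ is not $O(k^2)$.

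These rectangles are also unnecessary. Each one is stabbed by every choice $b_P \le b$. Every $b_P > b$ is already excluded by your witness, which is drawn from $\{R : t_R < b^+\}$ (with $b^+$ the next candidate) and may contain a few of them; keep only those. Delete that bullet, and handle the degenerate case where no candidate qualifies by keeping just a witness. Then each attached class costs at most $2k + (2k+1)$ lines, and your argument for (i)--(iii) goes through.

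\textbf{Comparison with the paper.} The paper never classifies rectangles by threshold or computes a cutoff. Let $\mathcal{R}'$ be the rectangles stabbed by $\mathcal{H}$ but not by $\mathcal{H}_1 \cup \mathcal{V}_1$. For each $\ell \in \partial P$ with $P \in \varGamma_\mathsf{v}$, the paper repeatedly deletes the rectangle of $\mathcal{R}'$ touching $\ell$ that is widest inside $P$. It stops once the survivors touching $\ell$ need at most $2k+1$ lines of $\mathcal{H}$, and then sets $\mathcal{H}_0 = \textsc{Stab}$ of all survivors.

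Property (ii) is proved by induction over the deletion order. When $R_i^*$ was deleted, the survivors at $\ell$ needed at least $2k+1$ lines, so $\mathcal{A}$ misses one of them. That survivor is stabbed by $\mathcal{B}$, necessarily by its line in $P$, and that line also stabs the wider $R_i^*$. Free rectangles are never pruned. Their share of $\mathcal{H}_0$ is bounded by $k$ via Lemma~\ref{lem-vstrip}(iv), so this bound holds only when the guesses $\varGamma_\mathsf{v}, \mathcal{V}_1$ are correct.

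Your version makes the cutoff explicit. It certifies infeasibility of large $b_P$, and of an overloaded free class, with small witnesses. As a result $|\mathcal{H}_0| = O(k^2)$ holds unconditionally, without appealing to $\mathcal{H}^*$. The price is the explicit case analysis (spanning, free, attached, unstabbable witnesses, empty or all-failing candidate sets), which the paper's single width-greedy rule absorbs automatically.
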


%\begin{lemma} \label{lem-redundant}
%    One can compute in $(|\mathcal{R}|+|\mathcal{L}|)^{O(1)}$ time $\mathcal{K} \subseteq \mathcal{R}$ and $\mathcal{H}_0 \subseteq \mathcal{H}$ satisfying the following.
%    \begin{enumerate}[(i)]
%        \item $|\mathcal{H}_0| \leq O(k^2)$.
%        \item For any $\mathcal{A} \subseteq \mathcal{H}$ with $|\mathcal{A}| \leq 2k$ and any $\varGamma_\mathsf{v}$-structured $\mathcal{B} \subseteq \mathcal{V}$, $\mathcal{R}$ is stabbed by $\mathcal{H}_1 \cup \mathcal{V}_1 \cup \mathcal{A} \cup \mathcal{B}$ if and only if $\mathcal{K}$ is stabbed by $\mathcal{A} \cup \mathcal{B}$.
%        \item $\mathcal{K}$ is stabbed by $\mathcal{H}_0$.
%    \end{enumerate}
%\end{lemma}
\begin{proof}
For a rectangle $R \in \mathcal{R}$ and a vertical strip $P$, we use $\mathsf{wid}_P(R)$ to denote the width of $R \cap P$, i.e., the length of the interval obtained by projecting $R \cap P$ to the $x$-axis.
For $\mathcal{R}' \subseteq \mathcal{R}$ and a line $\ell$, we use $\mathcal{R}' \Cap \ell$ to denote the subset of $\mathcal{R}'$ consisting of all rectangles stabbed by $\ell$.
As in the proof of Lemma~\ref{lem-vstrip}, for each $P \in \varGamma_\mathsf{v}$, let $\partial P \subseteq \mathcal{V}_0$ consist of the (at most) 2 borderlines of $P$.

The algorithm for computing $\mathcal{K}$ and $\mathcal{H}_0$ is presented in Algorithm~\ref{alg-redundant}.
Let $\mathcal{R}' \subseteq \mathcal{R}$ consist of the rectangles stabbed by $\mathcal{H}$ but not stabbed by $\mathcal{H}_1 \cup \mathcal{V}_1$.
The set $\mathcal{K}$ is obtained from $\mathcal{R}$ by removing a subset $\mathcal{X} \subseteq \mathcal{R}'$
as follows.
Initially, $\mathcal{X} = \emptyset$.
In each iteration, as long as there exists $\ell \in \partial P$ for some $P \in \varGamma_\mathsf{v}$ such that $\mathsf{opt}((\mathcal{R}' \backslash \mathcal{X}) \Cap \ell, \mathcal{H}) \geq 2k+2$, we add $R^*$ to $\mathcal{X}$, where $R^*$ is the rectangle in $(\mathcal{R}' \backslash \mathcal{X}) \Cap \ell$ that maximizes $\mathsf{wid}_P(R^*)$.
We keep doing this until there does not exist such a line $\ell$.
This completes the construction of $\mathcal{X}$.
We then set $\mathcal{K} = \mathcal{R} \backslash \mathcal{X}$ and define $\mathcal{H}_0 \subseteq \mathcal{H}$ as a minimum subset that stabs $\mathcal{R}' \backslash \mathcal{X}$.
Here the sub-routine $\textsc{Stab}$ is the same as the one in Algorithm~\ref{alg:linesweep}.

\begin{algorithm}
    \caption{Redundant Rectangle Elimination}
    \begin{algorithmic}[1]
        %\State $\varGamma_\mathsf{v} \leftarrow \emptyset$
        \State $\mathcal{R}' \leftarrow \{R \in \mathcal{R}: R \text{ is stabbed by } \mathcal{H} \text{ but not stabbed by } \mathcal{H}_1 \cup \mathcal{V}_1\}$
        \State $\mathcal{X} \leftarrow \emptyset$
        %\State $\mathcal{R''} \leftarrow \{r \in \mathcal{R}'| \exists P \in \varGamma_v,\ x^-(r) \leq x^-(P) \land x^+(r) \geq x^+(P)\}$
        %\State $\mathcal{R^*} \leftarrow \mathcal{R} \setminus (\mathcal{R'} \cup \mathcal{R''})$
        %\State $X^- = \{x^-(P): P\in \varGamma_v\}$
        %\State $X^+ = \{x^+(P): P\in \varGamma_v\}$
        \While{there exist $P \in \varGamma_\mathsf{v}$ and $\ell \in \partial P$ such that $\mathsf{opt}((\mathcal{R}' \backslash \mathcal{X}) \Cap \ell, \mathcal{H}) \geq 2k+2$}
            \State $R^* = \arg\max_{R \in (\mathcal{R}' \backslash \mathcal{X}) \Cap \ell} \mathsf{wid}_P(R)$
            \State $\mathcal{X} \leftarrow \mathcal{X} \cup \{R^*\}$ 
        \EndWhile
        \State $\mathcal{K} \leftarrow \mathcal{R} \backslash \mathcal{X}$
        \State $\mathcal{H}_0 \leftarrow \textsc{Stab}(\mathcal{R}' \backslash \mathcal{X}, \mathcal{H})$
        \State \textbf{return} $\mathcal{K}$ and $\mathcal{H}_0$
    \end{algorithmic}
    \label{alg-redundant}
\end{algorithm}

\begin{figure}[h]
    \centering
    \includegraphics[width=0.47\linewidth]{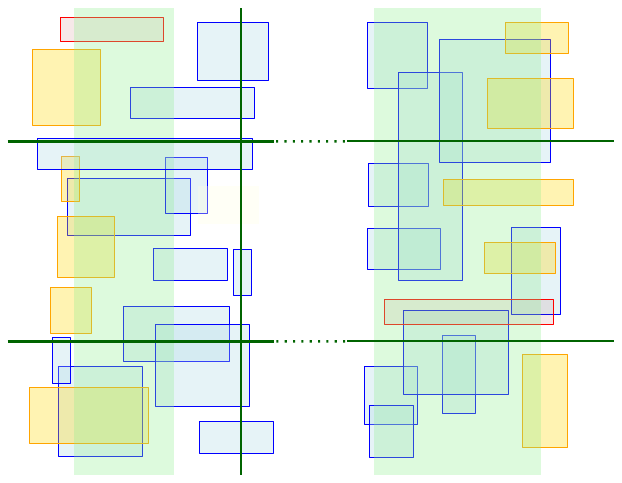}
    \hspace{0.04\linewidth}
    \includegraphics[width=0.47\linewidth]{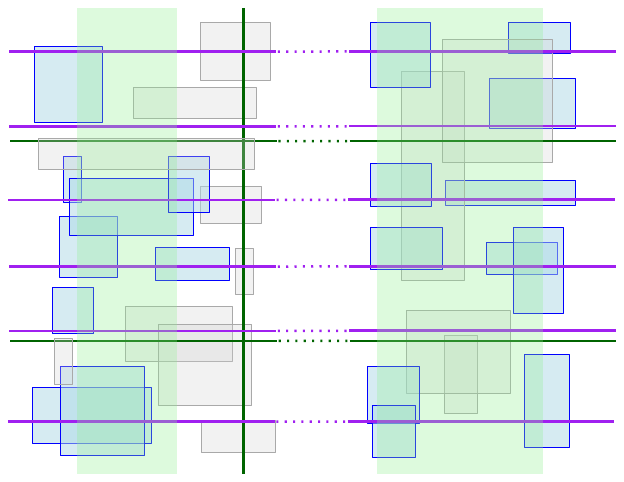}
    \caption{
    \textit{The subset of rectangles $\mathcal{K}$ and horizontal lines $\mathcal{H}_0$ computed in Lemma \ref{lem-redundant}.} On the left, for each vertical green strip of $\Gamma_v$, we remove the widest rectangle (red) that is part of a large set of horizontally disjoint rectangles (yellow) touching the strip boundary, since it must be stabbed vertically. The surviving rectangles make up $\mathcal{K}$. On the right, the set of purple lines $\mathcal{H}_0$ stab all remaining unstabbed rectangles.}
\end{figure}

We now verify the three conditions in the lemma.
Condition (iii) is clear from our construction.
Indeed, if a rectangle $R \in \mathcal{K}$ is stabbed by $\mathcal{H}$ but not stabbed by $\mathcal{H}_1 \cup \mathcal{V}_1$, then $R \in \mathcal{R}' \backslash \mathcal{X}$, which implies that $R$ is stabbed by $\mathcal{H}_0$.
Next we prove condition (i), for which we need $\mathsf{opt}(\mathcal{R}' \backslash \mathcal{X}, \mathcal{H}) = O(k^2)$.
First, observe that $\mathsf{opt}((\mathcal{R}' \backslash \mathcal{X}) \Cap \ell, \mathcal{H}) \leq 2k+1$ for all $\ell \in \bigcup_{P \in \varGamma_\mathsf{v}}\partial P$, for otherwise the while loop of the algorithm cannot terminate.
Since $|\bigcup_{P \in \varGamma_\mathsf{v}}\partial P| \leq 2|\varGamma_\mathsf{v}| = O(k)$ by (i) of Lemma~\ref{lem-vstrip}, it follows that all rectangles in $\mathcal{R}' \backslash \mathcal{X}$ stabbed by $\bigcup_{P \in \varGamma_\mathsf{v}}\partial P$ can be stabbed by $O(k^2)$ lines in $\mathcal{H}$.
So it suffices to consider the rectangles in $\mathcal{R}' \backslash \mathcal{X}$ that are not stabbed by $\bigcup_{P \in \varGamma_\mathsf{v}}\partial P$.
Note that if a rectangle $R \in \mathcal{R}' \backslash \mathcal{X}$ is not stabbed by $\partial P$, then we have $R \cap P = \emptyset$, because $P \in \varGamma_\mathsf{v} \subseteq \varGamma(\mathcal{V}_0)$ and all rectangles in $\mathcal{R}'$ are stabbed by $\mathcal{V}_0$ according to (iii) of Lemma~\ref{lem-sweep}.
Therefore, the rectangles in $\mathcal{R}' \backslash \mathcal{X}$ not stabbed by $\bigcup_{P \in \varGamma_\mathsf{v}}\partial P$ are just those disjoint from all strips in $\varGamma_\mathsf{v}$.
By (iv) of Lemma~\ref{lem-vstrip}, all such rectangles can be stabbed by $\mathcal{H}^*$.
As $|\mathcal{H}^*| \leq k$, we have condition (i) in the lemma.

Finally, we verify condition (ii).
We focus on the ``if'' direction; the ``only if'' direction is trivial because $\mathcal{K} \subseteq \mathcal{R}$.
Let $\mathcal{A} \subseteq \mathcal{H}$ with $|\mathcal{A}| \leq 2k$ and $\mathcal{B} \subseteq \mathcal{V}$ be $\varGamma_\mathsf{v}$-structured.
Since $\mathcal{R} \setminus \mathcal{K} = \mathcal{X} \subseteq \mathcal{R}'$ and no rectangle of $\mathcal{R}'$ is stabbed by $\mathcal{H}_1 \cup \mathcal{V}_1$, it suffices to show that $\mathcal{R}'$ is stabbed by $\mathcal{A} \cup \mathcal{B}$ if $\mathcal{R}' \backslash \mathcal{X}$ is stabbed by $\mathcal{A} \cup \mathcal{B}$.
%The ``only if'' direction is trivial.
Assume $\mathcal{R}' \backslash \mathcal{X}$ is stabbed by $\mathcal{A} \cup \mathcal{B}$.
Write $\mathcal{X} = \{R_1^*,\dots,R_r^*\}$ such that the rectangles are added to $\mathcal{X}$ in the order $R_r^*,\dots,R_1^*$ in Algorithm~\ref{alg-redundant}.
We show by induction that $\mathcal{R}' \setminus \{R_{i+1}^*,\dots,R_r^*\}$ is stabbed by $\mathcal{A} \cup \mathcal{B}$ for all $i \in [r]_0$, which implies that $\mathcal{R}'$ is stabbed by $\mathcal{A} \cup \mathcal{B}$.
The base case $i = 0$ clearly holds.
Suppose $\mathcal{R}' \backslash \{R_i^*,\dots,R_r^*\}$ is stabbed by $\mathcal{A} \cup \mathcal{B}$.
For induction, we need to show that $R_i^*$ is also stabbed by $\mathcal{A} \cup \mathcal{B}$.
By our assumption, at the point $R_i^*$ is added to $\mathcal{X}$, we have $\mathcal{X} = \{R_{i+1}^*,\dots,R_r^*\}$.
The reason for why we add $R_i^*$ to $\mathcal{X}$ is that $\mathsf{opt}((\mathcal{R}' \backslash \{R_{i+1}^*,\dots,R_r^*\}) \Cap \ell, \mathcal{H}) \geq 2k+2$ and $R_i^* = \arg\max_{R \in (\mathcal{R}' \backslash \{R_{i+1}^*,\dots,R_r^*\}) \Cap \ell} \mathsf{wid}_P(R)$ for some $P \in \varGamma_\mathsf{v}$ and $\ell \in \partial P$.
Note that $\mathsf{opt}((\mathcal{R}' \backslash \{R_{i+1}^*,\dots,R_r^*\}) \Cap \ell, \mathcal{H}) < \infty$, because $\mathcal{R}'$ is stabbed by $\mathcal{H}$.
Therefore, $\mathsf{opt}((\mathcal{R}' \backslash \{R_i^*,\dots,R_r^*\}) \Cap \ell, \mathcal{H}) \geq 2k+1$.
As $|\mathcal{A}| \leq 2k$ and $\mathcal{A} \subseteq \mathcal{H}$, we know that $(\mathcal{R}' \backslash \{R_i^*,\dots,R_r^*\}) \Cap \ell$ is not stabbed by $\mathcal{A}$.
But $(\mathcal{R}' \backslash \{R_i^*,\dots,R_r^*\}) \Cap \ell$ is stabbed by $\mathcal{A} \cup \mathcal{B}$ thanks to our induction hypothesis, which implies the existence of a rectangle $R \in (\mathcal{R}' \backslash \{R_i^*,\dots,R_r^*\}) \Cap \ell$ that is stabbed by a line $v \in \mathcal{B}$.
Note that $v$ must be contained in some strip in $\varGamma_\mathsf{v}$, for $\mathcal{B}$ is $\varGamma_\mathsf{v}$-structured.
We claim that $v \subseteq P$.
Since $R \in \mathcal{R}'$, $R$ is not stabbed by $\mathcal{V}_1$.
Furthermore, because $R \cap P \neq \emptyset$ and $\varGamma_\mathsf{v}$ is separated by $\mathcal{V}_1$ according to (ii) of Lemma~\ref{lem-vstrip}, we have $R \cap P' = \emptyset$ for all $P' \in \varGamma_\mathsf{v} \backslash \{P\}$.
Therefore, if $v \subseteq P'$ for some $P' \in \varGamma_\mathsf{v} \backslash \{P\}$, $R$ cannot be stabbed by $v$.
This implies $v \subseteq P$.
Using this fact, we show that $v$ stabs $R_i^*$.
As $R \in (\mathcal{R}' \backslash \{R_i^*,\dots,R_r^*\}) \Cap \ell$, $R$ intersects $\ell$ and hence one side of $R \cap P$ is bounded by $\ell$.
For the same reason, one side of $R_i^* \cap P$ is also bounded by $\ell$.
By construction, $\mathsf{wid}_P(R_i^*) \geq \mathsf{wid}_P(R)$.
Thus, if a vertical line contained in $P$ stabs $R$, it also stabs $R_i^*$.
In particular, $v$ stabs $R_i^*$.
It follows that $\mathcal{R}' \backslash \{R_{i+1}^*,\dots,R_r^*\}$ is stabbed by $\mathcal{A} \cup \mathcal{B}$.
This completes the induction argument and implies condition (i).
\end{proof}

\subsection{Finding good horizontal strips}

Based on the discussion in the previous sections, we already have the sets $\mathcal{K}$, $\mathcal{H}_0$, $\mathcal{H}_1$, $\mathcal{V}_1$, and $\varGamma_\mathsf{v}$.
The goal of the next step is somewhat similar to that in Subsection~\ref{sec-vstrip}.
This time we want to find a set $\varGamma_\mathsf{h}$ of \textit{horizontal} strips each of which contains one line in $\mathcal{H}^*$, together with a set of horizontal lines in $\mathcal{H}_0$ separating $\varGamma_\mathsf{h}$.
%We want each strip in $\varGamma_\mathsf{h}$ contains one line in $\mathcal{H}^*$ and these lines together with $\mathcal{H}_1 \cup \mathcal{H}^* \cup \mathcal{V}_1$ stab $\mathcal{R}$.
But the other desired properties for $\varGamma_\mathsf{h}$ are slightly different from those for $\varGamma_\mathsf{v}$ in Lemma~\ref{lem-vstrip}, and thus we need a different proof as well.
Recall that for each $P \in \varGamma_\mathsf{v}$, $v^*_P$ is the (unique) vertical line in $\mathcal{V}^*$ with $v_P^* \subseteq P$.
We prove the following Lemma.

\begin{lemma} \label{lem-hstrip}
    There exist $\varGamma_\mathsf{h} \subseteq \varGamma(\mathcal{H}_1 \cup \mathcal{H}_0)$ and $\mathcal{H}_1' \subseteq \mathcal{H}_0$ satisfying the following.
    \begin{enumerate}[(i)]
        %\item $|\varGamma_\mathsf{h}| \leq k_\mathsf{h}$ and $|\mathcal{H}_1'| \leq k_\mathsf{h} - |\mathcal{H}_1|$.
        \item $|\mathcal{H}_1| + |\varGamma_\mathsf{h}| + |\mathcal{H}_1'| \leq 2k_\mathsf{h}$
        \item $\varGamma_\mathsf{h}$ is separated by $\mathcal{H}_1 \cup \mathcal{H}_1'$.
        \item For each $Q \in \varGamma_\mathsf{h}$, there exists exactly one line $h^*_Q \in \mathcal{H}^*$ with $h_Q^* \subseteq Q$.
        \item $\mathcal{K}$ is stabbed by $\mathcal{H}_1 \cup \mathcal{V}_1 \cup \mathcal{H}_1' \cup \{h^*_Q: Q \in \varGamma_\mathsf{h}\} \cup \{v^*_P: P \in \varGamma_\mathsf{v}\}$.
    \end{enumerate}
\end{lemma}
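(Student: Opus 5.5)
The plan mirrors the proof of Lemma~\ref{lem-vstrip}, applied to the horizontal strips $\varGamma(\mathcal{H}_1 \cup \mathcal{H}_0)$. The budget now has to absorb $|\mathcal{H}_1|$ as well, and Lemma~\ref{lem-sweep}(i) pays for that. Call a strip $Q \in \varGamma(\mathcal{H}_1 \cup \mathcal{H}_0)$ \emph{light} if it contains exactly one line of $\mathcal{H}^*$ and \emph{heavy} if it contains at least two. Let $\partial Q$ be its (at most two) bounding lines, and $\partial^+ Q$ its upper bounding line. Write $y_1<\dots<y_p$ for the lines of $\mathcal{H}_1$. For each $i$, the line of $\mathcal{H}^*$ in $\mathbb{R}\times(y_{i-1},y_i]$ that lies highest is called the \emph{top line} of the $i$-th interval.

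I would set $\varGamma_\mathsf{h}$ to be the set of all light strips; unlike Lemma~\ref{lem-vstrip}, no parity trick is used. I would define $\mathcal{H}_1'$ as the union of $\mathcal{H}_0 \cap \mathcal{H}^*$, $\partial Q$ for every heavy $Q$, and $\partial^+ Q$ for every light $Q$ whose $\mathcal{H}^*$-line is \emph{not} the top line of its interval, all taken minus $\mathcal{H}_1$.

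Conditions (ii) and (iii) would be checked as follows.
\begin{itemize}
\item Condition (iii) holds by definition of light strips.
\item For (ii), take two light strips $Q$ below $Q'$. If the $\mathcal{H}^*$-line of $Q$ is not a top line, then $\partial^+ Q \in \mathcal{H}_1 \cup \mathcal{H}_1'$ separates them.
\item Otherwise that line is the top line of some interval $(y_{i-1},y_i]$. Then $Q'$ contains an $\mathcal{H}^*$-line strictly above it, hence above $y_i$, so $y_i \in \mathcal{H}_1$ separates them.
\item In both cases the separating lines are strip boundaries, so they are disjoint from the open strips in $\varGamma_\mathsf{h}$.
\end{itemize}

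For (iv), let $R \in \mathcal{K}$ be a rectangle not stabbed by $\mathcal{H}_1 \cup \mathcal{V}_1 \cup \{v^*_P\}$. By Lemma~\ref{lem-vstrip}(iv), $R$ is stabbed by some $h \in \mathcal{H}^*$. Since $R$ is then stabbed by $\mathcal{H}$, Lemma~\ref{lem-redundant}(iii) shows that $R$ is also stabbed by $\mathcal{H}_1 \cup \mathcal{V}_1 \cup \mathcal{H}_0$, hence by $\mathcal{H}_0$. There are three cases for $h$.
\begin{itemize}
\item If $h \in \mathcal{H}_0 \cup \mathcal{H}_1$, then $h \in \mathcal{H}_1 \cup \mathcal{H}_1'$.
\item If $h$ lies in a light strip $Q$, then $h = h^*_Q$.
\item If $h$ lies in a heavy strip $Q$, then $R$ meets $Q$ and is stabbed by $\mathcal{H}_0 \cup \mathcal{H}_1$, so some line of $\partial Q \subseteq \mathcal{H}_1 \cup \mathcal{H}_1'$ stabs $R$.
\end{itemize}

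The main obstacle is the counting in (i). I would use a charging scheme that gives each $h \in \mathcal{H}^*$ a budget of $2$ and then pays for every element of $\varGamma_\mathsf{h}$, $\mathcal{H}_1'$ and $\mathcal{H}_1$.
\begin{itemize}
\item A line $h \in \mathcal{H}^* \cap (\mathcal{H}_0 \cup \mathcal{H}_1)$ pays $1$ for itself, leaving $1$ spare.
\item A heavy strip pays its $\le 2$ boundary lines from its $\ge 4$ units, leaving at least $1$ spare per $\mathcal{H}^*$-line inside it.
\item A light strip whose line is not a top line pays $2$: one unit for the strip and one for $\partial^+ Q$.
\item A light strip whose line is a top line pays only $1$, leaving $1$ spare.
\end{itemize}
Hence every top line has at least one spare unit. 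By Lemma~\ref{lem-sweep}(i), each $y_i \in \mathcal{H}_1$ has its own interval with a top line, and distinct intervals give distinct top lines, so each $y_i$ is charged to a distinct spare unit. Summing gives $|\mathcal{H}_1| + |\varGamma_\mathsf{h}| + |\mathcal{H}_1'| \le 2k_\mathsf{h}$. Lines counted more than once, such as shared boundaries or lines in both $\mathcal{H}_1$ and $\partial Q$, only help the bound. The delicate point I would double-check is that top lines lying in heavy strips, or lying in $\mathcal{H}_1$ itself ($h = y_i$), really do have the spare unit claimed.
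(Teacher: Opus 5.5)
Your proposal is correct and is essentially the paper's argument. $\varGamma_\mathsf{h}$ is the set of all light strips, and $\mathcal{H}_1'$ collects $\mathcal{H}^*\cap\mathcal{H}_0$, the heavy-strip boundaries, and extra separators charged to the $\mathcal{H}^*$-lines of light strips that are not top lines of their $\mathcal{H}_1$-interval. The lines of $\mathcal{H}_1$ are charged to those top lines via Lemma~\ref{lem-sweep}(i), using the same key fact that the light strip of a top line is already separated by $\mathcal{H}_1$ from every higher light strip.

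The differences are cosmetic:
\begin{enumerate}[(a)]
\item You add $\partial^+ Q$ unconditionally, where the paper adds an arbitrary separator only for consecutive light strips not yet separated.
\item You merge the paper's two $k_\mathsf{h}$ budgets into two tokens per line of $\mathcal{H}^*$. The spare-unit checks you flagged do go through: a heavy strip with $m\ge 2$ lines keeps $2m-2\ge m$ units, and a top line in $\mathcal{H}_0\cup\mathcal{H}_1$ keeps one.
\item Removing $\mathcal{H}_1$ makes $\mathcal{H}_1'\subseteq\mathcal{H}_0$ hold literally, which the paper's heavy-boundary step glosses over.
\end{enumerate}
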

\begin{proof}
%Since we are only interested in the existence of the sets $\varGamma_h, \mathcal{H}_1'$, we again start from a fixed solution $\mathcal{L^*} = \mathcal{V}^* \cup \mathcal{H}^*$, although we will only need to know $\mathcal{H}^*$.
As in the proof of Lemma \ref{lem-vstrip}, for each strip $Q \in \varGamma(\mathcal{H}_0 \cup \mathcal{H}_1)$, let $\partial Q \subseteq \mathcal{H}_0 \cup \mathcal{H}_1$ consist of the (at most) two lines adjacent to $Q$.
Also, we say a strip $Q \in \varGamma(\mathcal{H}_0 \cup \mathcal{H}_1)$ is \emph{light} if $Q$ contains exactly one line of $\mathcal{H}^*$ and \emph{heavy} if $Q$ contains at least two lines of $\mathcal{H}^*$.
Let $\varGamma' \subseteq \varGamma(\mathcal{H}_0 \cup \mathcal{H}_1)$ (resp., $\varGamma'' \subseteq \varGamma(\mathcal{H}_0 \cup \mathcal{H}_1)$) consist of the light (resp., heavy) strips.
We simply define $\varGamma_\mathsf{h} = \varGamma'$.
The set $\mathcal{H}_1' \subseteq \mathcal{H}_0$ is defined as follows.
First, we include in $\mathcal{H}_1'$ all lines in $(\mathcal{H}^* \cap \mathcal{H}_0) \cup (\bigcup_{Q \in \varGamma''} \partial Q)$.
Suppose $\varGamma' = \{Q_1,\dots,Q_t\}$, where $Q_1,\dots,Q_t$ are sorted from bottom to top.
For every $i \in [t-1]$ such that $Q_i$ and $Q_{i+1}$ are not separated by $\mathcal{H}_1 \cup (\mathcal{H}^* \cap \mathcal{H}_0) \cup (\bigcup_{Q \in \varGamma''} \partial Q)$, we add to $\mathcal{H}_1'$ an arbitrary line $h \in \mathcal{H}_0$ that separates $Q_i$ and $Q_{i+1}$; such a line exists since $Q_i,Q_{i+1} \in \varGamma(\mathcal{H}_0 \cup \mathcal{H}_1)$ and $\mathcal{H}_1$ does not separate $Q_i, Q_{i+1}$.

\medskip
\begin{figure}[H]
    \centering
    \includegraphics[width=0.47\linewidth]{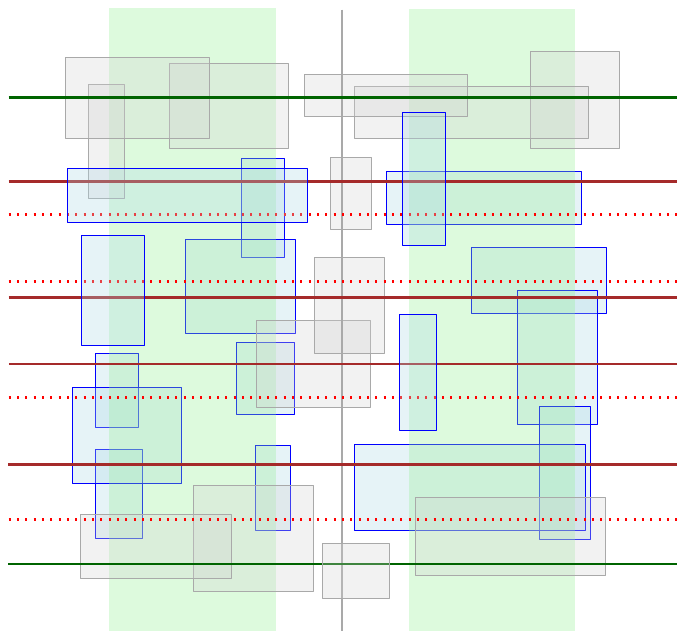}
    \hspace{0.04\linewidth}
    \includegraphics[width=0.47\linewidth]{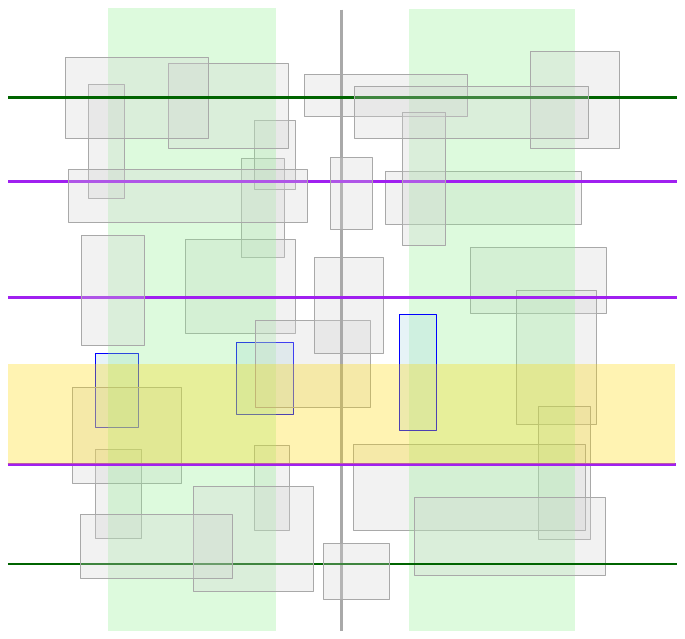}
    \caption{\textit{The horizontal lines $\mathcal{H}_1'$ and strips $\Gamma_h$ guaranteed by Lemma \ref{lem-hstrip}.} On the left, the horizontal dashed lines represent the unknown lines of OPT, $\mathcal{H}^*$, while the solid brown lines are those of $\mathcal{H}_0$ and the solid green lines are those of $\mathcal{H}_1$. On the right, the strips of $\Gamma_h$ are highlighted yellow. The horizontal lines of $\mathcal{H}_1'$ in purple include the boundaries of all heavy strips and arbitrary lines between two consecutive light strips. They separate the strips of $\Gamma_h$.
    All the rectangles are now stabbed by either a strip or a line.}
    \label{fig:algorithm_4}
\end{figure}

We now verify the four conditions in the lemma.
Conditions~(ii) and (iii) directly follow from our construction.
To see (iv), consider a rectangle $R \in \mathcal{K}$ that is not stabbed by $\mathcal{H}_1 \cup \mathcal{V}_1$.
If $R$ is not stabbed by $\mathcal{H}^*$, then (iv) of Lemma~\ref{lem-vstrip} implies that $R$ is stabbed by $\{v^*_P: P \in \varGamma_\mathsf{v}\}$.
So assume $R$ is stabbed by $\mathcal{H}^*$, and thus stabbed by $\mathcal{H}$ as well.
Then (iii) of Lemma~\ref{lem-redundant} implies that $R$ is also stabbed by $\mathcal{H}_0$.
Let $h \in \mathcal{H}^*$ be a line that stabs $R$.
If $h \in \mathcal{H}_0$, then $h \in \mathcal{H}_1'$ and hence $R$ is stabbed by $\mathcal{H}_1'$.
Otherwise, since $h \not\in \mathcal{H}_1$ because $\mathcal{H}_1$ does not stab $R$, $h \subseteq Q$ for some $Q \in \varGamma(\mathcal{H}_0 \cup \mathcal{H}_1)$.
As $\mathcal{H}_0$ stabs $R$, we know that $R$ is stabbed by $\partial Q$.
Thus, if $Q \in \varGamma''$, then $\partial Q \subseteq \mathcal{H}_1'$ and $R$ is stabbed by $\mathcal{H}_1'$.
If $Q \notin \varGamma''$, we must have $Q \in \varGamma'$ since $h \subseteq Q$.
Condition~(iii) then implies $h = h_Q^*$ and hence $R$ is stabbed by $\{h^*_Q: Q \in \varGamma_\mathsf{h}\}$.

Finally, we verify condition~(i).
We notice the inequality $|\mathcal{H}^* \cap \mathcal{H}_0| + |\varGamma'| + 2|\varGamma''| \leq |\mathcal{H}^*| = k_\mathsf{h}$.
Let $\mathcal{A} = \mathcal{H}_1' \backslash ((\mathcal{H}^* \cap \mathcal{H}_0) \cup (\bigcup_{Q \in \varGamma''} \partial Q))$.
We have 
\begin{equation*}
    |\mathcal{H}_1| + |\varGamma_\mathsf{h}| + |\mathcal{H}_1'| \leq |\mathcal{H}_1| + |\varGamma'| + (|\mathcal{H}^* \cap \mathcal{H}_0| + 2|\varGamma''| + |\mathcal{A}|) \leq k_\mathsf{h} + |\mathcal{H}_1| + |\mathcal{A}|.
\end{equation*}
Thus, to see (i), it suffices to show that $|\mathcal{H}_1| + |\mathcal{A}| \leq k_\mathsf{h}$.
To this end, we use a charging argument as follows.
Suppose $\mathcal{H}_1 = \{h_1,\dots,h_r\}$ where $h_1,\dots,h_r$ are sorted from bottom to top.
Let $y_i$ be the $y$-coordinate of $h_i$ for $i \in [r]$.
For convenience, write $y_0 = -\infty$.
For each $i \in [r]$, we charge $h_i$ to the topmost line in $\mathcal{H}^*$ whose $y$-coordinate is at most $y_i$.
By (i) of Lemma~\ref{lem-sweep}, there exists at least one line in $\mathcal{H}^*$ whose $y$-coordinate is in the range $(y_{i-1},y_i]$, and thus $h_i$ is charged to a line whose $y$-coordinate is in $(y_{i-1},y_i]$.
Therefore, the lines in $\mathcal{H}_1$ are charged to different lines in $\mathcal{H}^*$.
Next, recall how the lines in $\mathcal{A}$ are chosen.
Every line $h \in \mathcal{A}$ corresponds to an index $i \in [t-1]$ such that $Q_i$ and $Q_{i+1}$ are not separated by $\mathcal{H}_1 \cup (\mathcal{H}^* \cap \mathcal{H}_0) \cup (\bigcup_{Q \in \varGamma''} \partial Q)$; we then charge $h$ to $h_{Q_i}^*$, which is the (unique) line in $\mathcal{H}^*$ that is contained in $Q_i$.
Again, the lines in $\mathcal{A}$
% \todo{I believe this should be $\mathcal{A}$} RESOLVED, changed from $\mathcal{A}_1$ to $\mathcal{A}$
are charged to different lines in $\mathcal{H}^*$.
Now every line in $\mathcal{H}_1$ and $\mathcal{A}$ is charged to a line in $\mathcal{H}^*$.
Since $|\mathcal{H}^*| = k_\mathsf{h}$, to see $|\mathcal{H}_1| + |\mathcal{A}| \leq k_\mathsf{h}$, we only need to show that every line in $\mathcal{H}^*$ gets charged at most once.
As aforementioned, the lines in $\mathcal{H}_1$ (resp., $\mathcal{A}$) are charged to different lines in $\mathcal{H}^*$.
Thus, it suffices to exclude the case where a line $h \in \mathcal{H}^*$ gets charged by both $\mathcal{H}_1$ and $\mathcal{A}$.
If the $y$-coordinate of $h$ is greater than $y_r$, then only the lines in $\mathcal{A}$ can be charged to $h$.
So suppose the $y$-coordinate of $h$ is in the range $(y_{i-1},y_i]$ for $i \in [r]$.
If $h$ is not the topmost line in $\mathcal{H}^*$ with $y$-coordinate in $(y_{i-1},y_i]$, then again only the lines in $\mathcal{A}$ can be charged to $h$.
So suppose $h$ is the topmost line in $\mathcal{H}^*$ with $y$-coordinate in $(y_{i-1},y_i]$.
We claim that no line in $\mathcal{A}$ is charged to $h$.
Assume there is a line in $\mathcal{A}$ charged to $h$, which is chosen to separate the strips $Q_j$ and $Q_{j+1}$.
We then have $h = h_{Q_j}^*$.
Furthermore, $Q_j$ and $Q_{j+1}$ are not separated by $\mathcal{H}_1 \cup (\mathcal{H}^* \cap \mathcal{H}_0) \cup (\bigcup_{Q \in \varGamma''} \partial Q)$, and in particular not separated by $\mathcal{H}_1$.
It follows that the $y$-coordinate of $h_{Q_{j+1}}^*$ is also in $(y_{i-1},y_i]$.
Since $h_{Q_{j+1}}^* \in \mathcal{H}^*$ and $h_{Q_{j+1}}^*$ is above $h_{Q_j}^*$, this contradicts the assumption that $h$ is the topmost line in $\mathcal{H}^*$ with $y$-coordinate in $(y_{i-1},y_i]$.
Therefore, no line in $\mathcal{A}$ is charged to $h$.
We can finally conclude that $|\mathcal{H}_1| + |\mathcal{A}| \leq k_\mathsf{h}$, which implies condition~(i).
\end{proof}

Again, we cannot compute the sets $\varGamma_\mathsf{h}$ and $\mathcal{H}_1'$ in the above lemma, but we can afford guessing them.
We have $|\varGamma(\mathcal{H}_1 \cup \mathcal{H}_0)| = |\mathcal{H}_1 \cup \mathcal{H}_0|+1 = O(k^2)$ by (i) of Lemma~\ref{lem-sweep} and (i) of Lemma~\ref{lem-redundant}.
Also, $|\varGamma_\mathsf{h}| + |\mathcal{H}_1'| = O(k)$ by (i) of Lemma~\ref{lem-hstrip}.
Therefore, the total number of guesses for $\varGamma_\mathsf{h}$ and $\mathcal{H}_1'$ is bounded by $k^{O(k)}$, and making the guess leads to another $k^{O(k)}$ overhead on the running time.

%We say a subset $\mathcal{L}' \subseteq \mathcal{L}$ is \textit{$(\varGamma_\mathsf{h},\varGamma_\mathsf{v})$-structured} if for each $P \in \varGamma_\mathsf{h} \cup \varGamma_\mathsf{v}$, there exists exactly one line $\ell \in \mathcal{L}'$ with $\ell \subseteq P$.
%Clearly, if $\mathcal{L}'$ is $(\varGamma_\mathsf{h},\varGamma_\mathsf{v})$-structured, it is also $\varGamma_\mathsf{v}$-structured.
%Note that the set $\mathcal{H}_1 \cup \mathcal{H}_1' \cup \{h^*_Q: Q \in \varGamma_\mathsf{h}\} \cup \mathcal{V}_1 \cup \{v^*_P: P \in \varGamma_\mathsf{v}\}$ in (iv) of Lemma~\ref{lem-hstrip} is $(\varGamma_\mathsf{h},\varGamma_\mathsf{v})$-structured.
%Indeed, (ii) of Lemma~\ref{lem-hstrip} implies that the lines in $\mathcal{H}_1 \cup \mathcal{H}_1'$ are disjoint from the strips in $\varGamma_\mathsf{h}$, and as argued before, $\mathcal{V}_1 \cup \{v^*_P: P \in \varGamma_\mathsf{v}\}$ is $\varGamma_\mathsf{v}$-structured.
%The size of $\mathcal{H}_1 \cup \mathcal{H}_1' \cup \{h^*_Q: Q \in \varGamma_\mathsf{h}\} \cup \mathcal{V}_1 \cup \{v^*_P: P \in \varGamma_\mathsf{v}\}$ is $|\mathcal{H}_1| + |\mathcal{H}_1'| + |\varGamma_\mathsf{h}| + |\mathcal{V}_1| + |\varGamma_\mathsf{v}| \leq 2k_\mathsf{h}+\frac{3}{2} k_\mathsf{v}$, by (i) of Lemma~\ref{lem-sweep}, (i) of Lemma~\ref{lem-vstrip}, and (i) of Lemma~\ref{lem-hstrip}.

\subsection{Solving the problem by reducing to 2-SAT}
Note that (iv) of Lemma~\ref{lem-hstrip} implies that $\mathcal{K}$ can be stabbed by $\mathcal{H}_1 \cup \mathcal{V}_1 \cup \mathcal{H}_1'$ together with a $\varGamma_\mathsf{h}$-structured subset of $\mathcal{H}$ and a $\varGamma_\mathsf{v}$-structured subset of $\mathcal{V}$.
Thus, the last step of our algorithm is just to compute a $\varGamma_\mathsf{h}$-structured $\mathcal{H}_2 \subseteq \mathcal{H}$ and a $\varGamma_\mathsf{v}$-structured $\mathcal{V}_2 \subseteq \mathcal{V}$ such that $\mathcal{H}_1 \cup \mathcal{V}_1 \cup \mathcal{H}_1' \cup \mathcal{H}_2 \cup \mathcal{V}_2$ stabs $\mathcal{K}$. 
The size of $\mathcal{H}_1 \cup \mathcal{V}_1 \cup \mathcal{H}_1' \cup \mathcal{H}_2 \cup \mathcal{V}_2$ is $|\mathcal{H}_1| + |\mathcal{V}_1| + |\mathcal{H}_1'| + |\varGamma_\mathsf{h}| + |\varGamma_\mathsf{v}|$, which is at most $2k_\mathsf{h}+\frac{3}{2} k_\mathsf{v}$ by (i) of Lemma~\ref{lem-vstrip} and (i) of Lemma~\ref{lem-hstrip}.
%The total number of lines in these sets is $|\mathcal{H}_1| + |\mathcal{V}_1| + |\mathcal{H}_1'| + |\varGamma_\mathsf{h}| + |\varGamma_\mathsf{v}|$, which is at most $2k_\mathsf{h}+\frac{3}{2} k_\mathsf{v}$ by (i) of Lemma~\ref{lem-sweep}, (i) of Lemma~\ref{lem-vstrip}, and (i) of Lemma~\ref{lem-hstrip}.

Let $\mathcal{K}' \subseteq \mathcal{K}$ consist of the rectangles \textit{not} stabbed by $\mathcal{H}_1 \cup \mathcal{V}_1 \cup \mathcal{H}_1'$.
%Then we want $\mathcal{H}_2 \cup \mathcal{V}_2$ stabs $\mathcal{K}'$.
For each strip $Q \in \varGamma_\mathsf{h}$ (resp., $P \in \varGamma_\mathsf{v}$), define $\mathcal{H}_Q \subseteq \mathcal{H}$ (resp., $\mathcal{V}_P \subseteq \mathcal{V}$) as the subset consisting of all lines that are contained in $Q$ (resp., $P$).
Then our task is to pick one line from $\mathcal{H}_Q$ for each $Q \in \varGamma_\mathsf{h}$ (which forms $\mathcal{H}_2$) and pick one line from $\mathcal{V}_P$ for each $P \in \varGamma_\mathsf{v}$ (which forms $\mathcal{V}_2$) such that $\mathcal{K}'$ is stabbed by the picked lines.
We show that this task can be reduced to solving a 2-SAT instance, which can be done in polynomial time. The key observation is the following.

\begin{lemma} \label{lem-atmost1}
    Each rectangle in $\mathcal{K}'$ intersects at most one strip in $\varGamma_\mathsf{h}$ and at most one strip in $\varGamma_\mathsf{v}$.
\end{lemma}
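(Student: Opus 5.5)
The argument rests on the separation properties, condition (ii) of Lemma~\ref{lem-vstrip} and condition (ii) of Lemma~\ref{lem-hstrip}. Since every rectangle is connected, a rectangle meeting two separated strips must cross a separating line. The key point is that every rectangle in $\mathcal{K}'$ is, by definition, not stabbed by $\mathcal{H}_1 \cup \mathcal{V}_1 \cup \mathcal{H}_1'$, so it can cross none of these lines.

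We treat the vertical strips first. Suppose some $R \in \mathcal{K}'$ intersects two distinct strips $P, P' \in \varGamma_\mathsf{v}$. By (ii) of Lemma~\ref{lem-vstrip}, some line $v \in \mathcal{V}_1$ separates $P$ and $P'$, so these strips lie on opposite sides of $v$. The projection of $R$ onto the $x$-axis is an interval $[a,b]$. This interval contains a point of the projection of $P$ and a point of the projection of $P'$, and these two points lie on opposite sides of the $x$-coordinate of $v$. Since $[a,b]$ is convex, it contains the $x$-coordinate of $v$, and hence $v$ stabs $R$. This contradicts $R \in \mathcal{K}'$, because $v \in \mathcal{V}_1$.

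The horizontal strips are handled the same way. Suppose $R$ meets two distinct strips $Q, Q' \in \varGamma_\mathsf{h}$. By (ii) of Lemma~\ref{lem-hstrip}, some line $h \in \mathcal{H}_1 \cup \mathcal{H}_1'$ separates $Q$ and $Q'$. Projecting $R$ onto the $y$-axis, the same convexity argument shows that $h$ stabs $R$, which again contradicts $R \in \mathcal{K}'$.

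There is no real obstacle here. The only point needing care is to apply the definition of ``separated'' correctly, namely that the two strips lie on opposite sides of the line. Combined with the convexity of the projection of an axis-parallel rectangle, this forces the rectangle to meet the line.
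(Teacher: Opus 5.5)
Your proof is correct and follows essentially the same approach as the paper: assume $R \in \mathcal{K}'$ meets two strips, invoke condition (ii) of Lemma~\ref{lem-vstrip} (resp.\ Lemma~\ref{lem-hstrip}) to get a separating line in $\mathcal{V}_1$ (resp.\ $\mathcal{H}_1 \cup \mathcal{H}_1'$) that must stab $R$, and derive a contradiction with the definition of $\mathcal{K}'$. Your projection/convexity argument just spells out the step the paper leaves implicit.
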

\begin{proof}
Let $R \in \mathcal{K}'$.
Assume $R$ intersects two (different) strips $P, P' \in \varGamma_\mathsf{v}$.
By (ii) of Lemma~\ref{lem-vstrip}, $\mathcal{V}_1$ separates $\varGamma_\mathsf{v}$ and in particular separates $P$ and $P'$.
Thus, $\mathcal{V}_1$ stabs $R$, contradicting the fact that $R \in \mathcal{K}'$.
Similarly, assume $R$ intersects two (different) strips $Q, Q' \in \varGamma_\mathsf{h}$.
By (ii) of Lemma~\ref{lem-hstrip}, $\mathcal{H}_1 \cup \mathcal{H}_1'$ separates $\varGamma_\mathsf{h}$ and in particular separates $Q$ and $Q'$.
Thus, $\mathcal{H}_1 \cup \mathcal{H}_1'$ stabs $R$, contradicting the fact that $R \in \mathcal{K}'$.
\end{proof}
Using Lemma \ref{lem-atmost1}, it is not hard to reduce the remaining task to a 2-SAT instance, which is polynomial time solvable.
We get the following Lemma.

\begin{lemma}
\label{lem-2sat}
    One can compute in $(|\mathcal{R}|+|\mathcal{L}|)^{O(1)}$ time a $\varGamma_\mathsf{h}$-structured $\mathcal{H}_2 \subseteq \mathcal{H}$ and a $\varGamma_\mathsf{v}$-structured $\mathcal{V}_2 \subseteq \mathcal{V}$ such that $\mathcal{H}_1 \cup \mathcal{V}_1 \cup \mathcal{H}_1' \cup \mathcal{H}_2 \cup \mathcal{V}_2$ stabs $\mathcal{K}$.
\end{lemma}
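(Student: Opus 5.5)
We reduce the remaining task to a 2-SAT instance whose variables encode, for each strip, which line is picked, using a threshold (order) encoding. Such an instance is solvable in polynomial time and has a solution whenever a suitable $\mathcal{H}_2,\mathcal{V}_2$ exists. Lemma~\ref{lem-hstrip}(iv) guarantees existence under our guesses. If the 2-SAT instance is unsatisfiable, the current guess is discarded.

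For each $P \in \varGamma_\mathsf{v}$, sort $\mathcal{V}_P = \{v_1,\dots,v_s\}$ by $x$-coordinate. Introduce Boolean variables $x_{P,j}$ for $j \in [s]$, intended to mean ``the picked line in $P$ has index $\leq j$''. Add the consistency clauses $x_{P,j} \Rightarrow x_{P,j+1}$ and force $x_{P,s}$ true. An assignment then corresponds bijectively to a choice of one line, namely index $\min\{j : x_{P,j}\}$. Do the same with variables $y_{Q,j}$ for each $Q \in \varGamma_\mathsf{h}$ over $\mathcal{H}_Q$ sorted by $y$-coordinate.

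Now take $R \in \mathcal{K}'$. The lines of $\mathcal{H}_1 \cup \mathcal{V}_1 \cup \mathcal{H}_1'$ do not stab $R$, so $R$ can only be stabbed by picked lines inside strips of $\varGamma_\mathsf{h} \cup \varGamma_\mathsf{v}$. By Lemma~\ref{lem-atmost1}, $R$ meets at most one $P \in \varGamma_\mathsf{v}$ and at most one $Q \in \varGamma_\mathsf{h}$. The lines of $\mathcal{V}_P$ stabbing $R$ form a contiguous index interval $[a,b]$, because $R$ projects to an interval. The condition ``the picked line in $P$ stabs $R$'' is therefore $x_{P,b} \wedge \neg x_{P,a-1}$, with the obvious conventions at the ends. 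Let $\alpha$ and $\beta$ denote the corresponding conditions for $P$ and $Q$ respectively. The requirement on $R$ is $\alpha \vee \beta$, and missing strips contribute false.

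The main obstacle is that $\alpha \vee \beta$ is a disjunction of two conjunctions of two literals each. Distributing gives four 2-clauses: $(x_{P,b}\vee y_{Q,b'})$, $(x_{P,b}\vee\neg y_{Q,a'-1})$, $(\neg x_{P,a-1}\vee y_{Q,b'})$, and $(\neg x_{P,a-1}\vee \neg y_{Q,a'-1})$. These are exactly equivalent to $\alpha\vee\beta$, so the formula stays in 2-CNF. If $R$ meets only one strip, the constraint becomes the unit clauses of $\alpha$ or of $\beta$. If $R$ meets no strip, it is unstabbable and we reject the guess. If the interval $[a,b]$ is empty, $\alpha$ is false.

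The formula has polynomially many variables and clauses. By solving it with the standard linear-time 2-SAT algorithm and decoding the assignment, we obtain $\Gamma$-structured $\mathcal{H}_2$ and $\mathcal{V}_2$ that stab $\mathcal{K}'$. Hence $\mathcal{H}_1 \cup \mathcal{V}_1 \cup \mathcal{H}_1' \cup \mathcal{H}_2 \cup \mathcal{V}_2$ stabs $\mathcal{K}$. Satisfiability is guaranteed because the lines $h^*_Q$ and $v^*_P$ of Lemma~\ref{lem-hstrip}(iv) define a satisfying assignment.
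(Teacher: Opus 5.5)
Your proposal is correct and follows essentially the same route as the paper: a threshold-encoded 2-SAT instance (you use ``index $\le j$'' where the paper uses ``at or to the right of $v$''), Lemma~\ref{lem-atmost1} to confine each rectangle of $\mathcal{K}'$ to at most one strip in each direction, and distributing $\alpha\vee\beta$ into four 2-clauses. Your unit clause forcing $x_{P,s}$ true (so that exactly one line is picked per strip) and your explicit appeal to Lemma~\ref{lem-hstrip}(iv) for satisfiability under the correct guess make precise two points that the paper leaves implicit.
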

\begin{proof}
    The variables of the 2-SAT instance are defined as follows.
For each $P \in \varGamma_\mathsf{v}$ and each $v \in \mathcal{V}_P$, we define a variable $x_{P,\geq v}$, which is used to indicate whether the line we select from $\mathcal{V}_P$ is to the right of $v$ (including $v$ itself).
Similarly, for each $Q \in \varGamma_\mathsf{h}$ and each $h \in \mathcal{H}_Q$, we define a variable $y_{Q,\geq h}$, which is used to indicate whether the line we select from $\mathcal{H}_Q$ is above $h$ (including $h$ itself).
We need two types of clauses.
The first type of clauses ensure that assignment to the variables is consistent in the sense that for each $P \in \varGamma_\mathsf{v}$ (resp., $Q \in \varGamma_\mathsf{h}$), the variables $x_{P,\geq v}$ (resp., $y_{Q,\geq h}$) correctly encodes the choice of one line in $\mathcal{V}_P$ (resp., $\mathcal{H}_Q$).
Specifically, for each $P \in \varGamma_\mathsf{v}$ and two lines $v,v' \in \mathcal{V}_P$ such that $v'$ is to the right of $v$, we introduce a $2$-literal clause $x_{P,\geq v'} \rightarrow x_{P,\geq v}$, i.e., $(\neg x_{P,\geq v'}) \vee x_{P,\geq v}$.
Also, for each $Q \in \varGamma_\mathsf{h}$ and two lines $h,h' \in \mathcal{H}_Q$ such that $h'$ is above $h$, we introduce a $2$-literal clause $y_{Q,\geq h'} \rightarrow y_{Q,\geq h}$, i.e., $(\neg y_{Q,\geq h'}) \vee y_{Q,\geq h}$.
If an assignment to the variables satisfies these clauses, then the lines chosen from $\mathcal{V}_P$ (resp., $\mathcal{H}_Q$) is just the rightmost $v \in \mathcal{V}_P$ (resp., topmost $h \in \mathcal{H}_Q$) such that the variable $x_{P,\geq v}$ (resp., $y_{Q,\geq h}$) is true.

The second type of clauses ensure that each rectangle in $\mathcal{K}'$ is stabbed by the lines chosen by the variables.
Consider a rectangle $R \in \mathcal{K}'$.
We shall introduce $O(1)$ clauses for $R$ such that the lines chosen stab $R$ if and only if the clauses are satisfied.
We know that $R$ intersects at least one strip in $\varGamma_\mathsf{h} \cup \varGamma_\mathsf{v}$, because the rectangles in $\mathcal{K}'$ can be stabbed by a $\varGamma_\mathsf{h}$-structured subset of $\mathcal{H}$ together with a $\varGamma_\mathsf{v}$-structured subset of $\mathcal{V}$.
We consider three cases: (i) $R$ is disjoint from the strips in $\varGamma_\mathsf{h}$, (ii) $R$ is disjoint from the strips in $\varGamma_\mathsf{v}$, and (iii) $R$ intersects both strips in $\varGamma_\mathsf{h}$ and strips in $\varGamma_\mathsf{v}$.
The first two cases are symmetric, and thus it suffices to consider case (i).
Suppose $R$ is disjoint from the strips in $\varGamma_\mathsf{h}$.
Then by Lemma~\ref{lem-atmost1}, there exists a unique $P \in \varGamma_\mathsf{v}$ with $R \cap P \neq \emptyset$, and $R$ can only be stabbed by the vertical line chosen from $\mathcal{V}_P$.
Let $v \in \mathcal{V}_P$ be the leftmost line that stabs $R$, and $v' \in \mathcal{V}_P$ be the leftmost line to the right of $v$ that does not stab $R$.
We then introduce two $1$-literal clauses $x_{P,\geq v}$ and $\neg x_{P,\geq v'}$; clearly, $R$ is stabbed by the line chosen from $\mathcal{V}_P$ if and only if both of $x_{P,\geq v}$ and $\neg x_{P,\geq v'}$ are true.
(If the line $v'$ does not exist, then we only need one clause $x_{P,\geq v}$.)
Next, we consider case (iii).
In this case, by Lemma~\ref{lem-atmost1}, there exists a unique $P \in \varGamma_\mathsf{v}$ with $R \cap P \neq \emptyset$ and a unique $Q \in \varGamma_\mathsf{h}$ with $R \cap Q \neq \emptyset$.
The rectangle $R$ can be stabbed by the line chosen from $\mathcal{V}_P$ or the line chosen from $\mathcal{H}_Q$.
Let $v \in \mathcal{V}_P$ (resp., $h \in \mathcal{H}_Q$) be the leftmost (resp., bottommost) line that stabs $R$, and $v' \in \mathcal{V}_P$  (resp., $h' \in \mathcal{H}_Q$) be the leftmost (resp., bottommost) line to the right of $v$ (resp., above $h$) that does not stab $R$.
Then $R$ is stabbed if and only if $(x_{P,\geq v} \wedge (\neg x_{P,\geq v'})) \vee (y_{Q,\geq h} \wedge (\neg y_{Q,\geq h'}))$, which is equivalent to the conjunction of the four $2$-literal clauses $x_{P,\geq v} \vee y_{Q,\geq h}$, $x_{P,\geq v} \vee (\neg y_{Q,\geq h'})$, $(\neg x_{P,\geq v'}) \vee y_{Q,\geq h}$, and $(\neg x_{P,\geq v'}) \vee (\neg y_{Q,\geq h'})$.

Based on the above discussion, we can construct a 2-CNF $\phi$ of polynomial length solving which gives us a $\varGamma_\mathsf{h}$-structured subset of $\mathcal{H}$ and a $\varGamma_\mathsf{v}$-structured subset of $\mathcal{V}$ which together stab $\mathcal{K}'$.
\end{proof}

Now we are ready to prove Theorem~\ref{thm-algorithm}.
We already computed $\mathcal{H}_1 \cup \mathcal{V}_1 \cup \mathcal{H}_1' \cup \mathcal{H}_2 \cup \mathcal{V}_2$ that stabs $\mathcal{K}$, and the time cost is $k^{O(k)} n^{O(1)}$ where $n = |\mathcal{R}|+|\mathcal{L}|$.
Indeed, the guess of $\varGamma_\mathsf{v}$, $\varGamma_\mathsf{h}$, $\mathcal{V}_1$, $\mathcal{H}_1'$ results in the $k^{O(k)}$ factor, and all the other steps can be done in polynomial time.
Note that
\begin{align*}
    |\mathcal{H}_1 \cup \mathcal{V}_1 \cup \mathcal{H}_1' \cup \mathcal{H}_2 \cup \mathcal{V}_2| =\ & |\mathcal{H}_1| + |\mathcal{V}_1| + |\mathcal{H}_1'| + |\varGamma_\mathsf{v}| + |\varGamma_\mathsf{h}| \\
    \leq\ & (|\mathcal{H}_1| + |\mathcal{H}_1'| + |\varGamma_\mathsf{h}|) + (|\mathcal{V}_1| + |\varGamma_\mathsf{v}|) \\
    \leq\ & 2k_\mathsf{h} + \frac{3}{2}k_\mathsf{v} \leq \frac{7}{4} k,
\end{align*}
by (i) of Lemma~\ref{lem-vstrip} and (i) of Lemma~\ref{lem-hstrip}.
Finally, it suffices to show that $\mathcal{H}_1 \cup \mathcal{V}_1 \cup \mathcal{H}_1' \cup \mathcal{H}_2 \cup \mathcal{V}_2$ stabs $\mathcal{R}$.
Write $\mathcal{A} = \mathcal{H}_1' \cup \mathcal{H}_2$ and $\mathcal{B} = \mathcal{V}_2$.
We have $|\mathcal{A}| \leq 2k$ and $\mathcal{B}$ is $\varGamma_\mathsf{v}$-structured.
As $\mathcal{H}_1 \cup \mathcal{V}_1 \cup \mathcal{H}_1' \cup \mathcal{H}_2 \cup \mathcal{V}_2$ stabs $\mathcal{K}$, (ii) of Lemma~\ref{lem-redundant} implies that it also stabs $\mathcal{R}$.
This completes the proof of Theorem~\ref{thm-algorithm}.

% \section{Rectangle Stabbing in 3 Dimensions}
% \label{sec-alg3d}
% \input{alg3d}

\section{Hardness of Parameterized Approximation} \label{sec-hardness}
%\begin{theorem}\label{thm:hardness}
%    Assuming $W[1]\neq FPT$, for all constant $\epsilon>0$, there is no $FPT$-time algorithm such that, on every input instance $I$ of \textsc{Rectangle Stabbing} with parameter $k$:
    %\begin{itemize}
        %\item Output ``yes'' if there is a solution using at most $k$ lines;
        %\item Output ``no'' if every solution uses at least $(1.25-\epsilon)k$ edges.
    %\end{itemize}
%\end{theorem}

In this section, we prove Theorem~\ref{thm-hardness} by a gap preserving reduction from the \textsc{Multicolored Clique} problem, which is defined as follows. 
\begin{tcolorbox}[colback=gray!5!white,colframe=gray!75!black]
        \textsc{Multicolored Clique} \hfill \textbf{Parameter:} $k$
        \vspace{0.1cm} \\
        \textbf{Input:} An undirected graph $G$; A partition $\mathcal{P}=V_1,V_2,V_3,\dots,V_k$ for $V(G)$ where $|V_1|=|V_2|=\dots =|V_k|=r$. \\
        \textbf{Goal:} Find the maximum size clique $C$ such that $\forall i\in [k],|C\cap V_i|\leq 1$. 
\end{tcolorbox}
A clique $C$ is said to be {\em multicolored}, if for all $i\in [k]$ we have $|C\cap V_i|\leq 1$. We will rely on the following result of Chen et al.~\cite{chen2025simple} (see also~\cite{karthik2022almost,lin2021constant,LinRSW22}).

\begin{proposition}\label{prop:cliqueHardness}
Assuming FPT $\neq$ W[1],
for every function $h : \mathbb{N} \rightarrow \mathbb{N}$ such that $h(k) = k^{o(1)}$
there is no algorithm that takes as input a graph $G$ and integer $k$, runs in time $f(k)|V(G)|^{O(1)}$ and either concludes that $G$ has no clique of size at least $k$ or produces a clique of size at least $k/h(k)$.
\end{proposition}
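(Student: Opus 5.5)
This proposition is not proved from scratch in this paper. The plan is to derive it directly from the parameterized inapproximability of \textsc{Clique} established by Chen et al.~\cite{chen2025simple}, building on~\cite{karthik2022almost,lin2021constant,LinRSW22}, and then check that their formulation implies the exact form stated here. Concretely, I would invoke the result of Karthik and Khot~\cite{karthik2022almost}, reproved more simply in~\cite{chen2025simple}. That result says that, assuming FPT $\neq$ W[1], for every $h(k)=k^{o(1)}$ there is no $f(k)|V(G)|^{O(1)}$-time algorithm for the following \emph{decision gap} problem: given $(G,k)$, distinguish the case that $G$ has a $k$-clique from the case that $G$ has no clique of size $k/h(k)$.

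The one step needing an argument is that the \emph{search/approximation} version stated in Proposition~\ref{prop:cliqueHardness} yields a decider for this gap problem. Suppose an algorithm $\mathbb{A}$ runs in time $f(k)|V(G)|^{O(1)}$ and on input $(G,k)$ either concludes that no $k$-clique exists or outputs a clique of size at least $k/h(k)$. On a gap instance we run $\mathbb{A}$ and argue case by case.
\begin{itemize}
\item If $G$ has a $k$-clique, then $\mathbb{A}$ cannot correctly conclude otherwise, so it outputs a clique of size at least $k/h(k)$. We verify this clique in polynomial time and answer ``yes''.
\item If $G$ has no clique of size $k/h(k)$, then $\mathbb{A}$ cannot output such a clique, so it must report non-existence. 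We answer ``no''.
\end{itemize}
If $\mathbb{A}$ misbehaves, for instance by exceeding its time bound or outputting an invalid set, we answer ``no''. This only matters on inputs that are not gap instances, which is harmless. This yields an FPT decider for the gap problem, contradicting the cited hardness.

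The main obstacle, were one to reprove the cited result, is the gap amplification itself. That means constructing an FPT reduction from exact $k$-\textsc{Clique} which creates an almost-polynomial gap $k^{1-o(1)}$. This goes via threshold graph compositions or error-correcting codes with a local-testability/list-decoding argument, as in~\cite{karthik2022almost,chen2025simple}. I would not attempt this here and would treat it as a black box.

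I would also briefly remark on how this will be used later. The multicolored variant inherits the same hardness by the standard color-coding-free reduction: take $k$ copies of $V(G)$ as color classes, with edges between copies of distinct adjacent vertices. This reduction preserves clique sizes exactly, and padding classes with isolated vertices makes all $|V_i|=r$. That is the form needed for the gap-preserving reduction to \textsc{Rectangle Stabbing}.
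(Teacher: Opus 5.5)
Your proposal matches the paper, which likewise does not prove this proposition and instead imports it as a black box from Chen et al.~\cite{chen2025simple} (see also~\cite{karthik2022almost,lin2021constant,LinRSW22}). Your translation from the $k$ versus $k/h(k)$ gap-decision formulation to the search formulation stated here is correct: on gap instances the type of output of the assumed algorithm already determines the answer, so the verification step is not even needed. Your aside on the multicolored variant is exactly the construction of Theorem 13.7 of~\cite{cygan2015parameterized} that the paper uses next.
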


Proposition~\ref{prop:cliqueHardness} asserts hardness of the {\sc Clique} problem, while we need hardness of the {\sc Multicolored Clique} problem.  Theorem 13.7 of~\cite{cygan2015parameterized} gives a parameterized reduction from {\sc Clique} to {\sc Multicolored Clique}. The proof of this theorem is an algorithm that given $G, k$ constructs a graph $G'$ with partition $V_1, \ldots, V_k$ of $V(G')$, such that {\em (i)} if $G$ has clique of size $k$ then $G'$ has a multicolored clique of size $k$, and {\em (ii)} given a multi-colored clique $C'$ in $G'$ with $|C'|=k$ one can construct in polynomial time a clique $C$ in $G$ with $|C|=|C'|$. The very short proof of {\em (ii)} never uses the assumption that $|C'|=k$, so it in fact shows that {\em (ii)'} given a multi-colored clique $C'$ in $G'$ 
% \todo{We shouldn't again have $|C'| = k$? Copy paste error? (I think i agree - AM, if someone else also does them remove it\\ {\em I believe this to be resolved - Ajay} (RESOLVED-ajay&anikait}
one can construct in polynomial time a clique $C$ in $G$ with $|C|=|C'|$. The construction of $G'$ from $G$ in Theorem 13.7 of~\cite{cygan2015parameterized} together with {\em (i)} and {\em (ii)'} imply the following. 

\begin{theorem}\label{prop:mccHardness}
Assuming FPT $\neq$ W[1],
for every function $h : \mathbb{N} \rightarrow \mathbb{N}$ such that $h(k) = k^{o(1)}$
there is no algorithm that takes as input a graph $G$, integer $k$ and partition $V_1, V_2, \ldots V_k$ of $V(G)$, runs in time $f(k)|V(G)|^{O(1)}$ and either concludes that $G$ has no multicolored clique of size at least $k$ or produces a multicolored clique of size at least $k/h(k)$.
\end{theorem}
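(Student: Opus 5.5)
We argue by contradiction. Assume FPT $\neq$ W[1], fix $h(k)=k^{o(1)}$, and suppose some algorithm $\mathbb{A}$ runs in time $f(k)|V(G)|^{O(1)}$ on inputs $(G,k,V_1,\dots,V_k)$. Suppose further that $\mathbb{A}$ either correctly reports that $G$ has no multicolored clique of size $k$, or outputs a multicolored clique of size at least $k/h(k)$. We will turn $\mathbb{A}$ into an algorithm for the \textsc{Clique} version, which Proposition~\ref{prop:cliqueHardness} rules out.

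The reduction is the construction of Theorem 13.7 of~\cite{cygan2015parameterized}. Given $(G,k)$, build $G'$ with vertex set $V(G)\times[k]$ and parts $V_i=V(G)\times\{i\}$, each of size $r=|V(G)|$. Join $(u,i)$ and $(w,j)$ exactly when $i\neq j$ and $uw\in E(G)$. This takes polynomial time and gives $|V(G')|=k|V(G)|$. The new algorithm runs $\mathbb{A}$ on $(G',k,V_1,\dots,V_k)$ and translates the answer back. Its running time is $f(k)(k|V(G)|)^{O(1)}\le f'(k)|V(G)|^{O(1)}$, which is still FPT.

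Correctness has two directions.
\begin{itemize}
\item[(i)] Suppose $\{u_1,\dots,u_k\}$ is a clique in $G$. Then $\{(u_i,i)\}_{i\in[k]}$ is a multicolored clique of size $k$ in $G'$. So if $\mathbb{A}$ reports that $G'$ has no such clique, $G$ truly has no $k$-clique, and the \textsc{Clique} algorithm can report the same.
\item[(ii)'] Suppose $\mathbb{A}$ returns a multicolored clique $C'=\{(u_{i_1},i_1),\dots,(u_{i_s},i_s)\}$ with $s\ge k/h(k)$. We project it to $C=\{u_{i_1},\dots,u_{i_s}\}$. Any two members of $C'$ lie in different parts and are adjacent, so $u_{i_a}u_{i_b}\in E(G)$, which also forces $u_{i_a}\neq u_{i_b}$ since $G$ has no loops. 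Hence $C$ is a clique in $G$ with $|C|=s\ge k/h(k)$.
\end{itemize}
Note that (ii)' does not use $s=k$ anywhere; this is the one point where the standard proof has to be rechecked, and it holds.

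Together, (i) and (ii)' show that the composed algorithm either correctly concludes that $G$ has no clique of size $k$ or outputs a clique of size at least $k/h(k)$, in FPT time. This contradicts Proposition~\ref{prop:cliqueHardness}.

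One formality remains. The problem statement requires $|V_i|=r$ for all $i$, which holds here with $r=|V(G)|$. The theorem also quantifies over the same $h$ and the same $k$ as Proposition~\ref{prop:cliqueHardness}, so no reparameterization is needed.
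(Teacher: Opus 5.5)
Your proposal is correct and follows the paper's argument. Like the paper, you compose the Clique-to-Multicolored-Clique construction of Theorem 13.7 in~\cite{cygan2015parameterized} with Proposition~\ref{prop:cliqueHardness}, using the forward direction (i) and the observation that the back-translation (ii)$'$ never needs $|C'|=k$. The only difference is that you write out the construction and the running-time bookkeeping explicitly, which the paper leaves to the cited book.
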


%We aim to prove the following lemma, thus proving Theorem~\ref{thm:hardness}.

%%% Turning off for now. Not sure we need a sketch given how short the reduction is. Might want to add some of the ideas into the introduction, through. 
%\input{sketch_hardness}
%\subsection{The reduction}
%\todo[inline]{the reduction allows degenerate rectangles that are just one line, this is not really necessary but slightly simplifies notation. Maybe make this as a remark.}

%\todo[inline]{It appears that the algorithm has open strips and the reduction has closed strips. I think in the preliminaries we should say strip means open strip and closed strip is a closed strip, and then fix the notation in the reduction. DL: DONE}

\subsection{Reduction from {\sc Multicolored Clique} to {\sc Rectangle Stabbing}}\label{sec:construction}

We now present a reduction that given a graph $G$, integers $k$ and $r$ and a partition of $V(G)$ into $V_1,V_2,\dots,V_k$ such that $|V_i|=r$ for every $i \in [k]$ produces an instance of {\sc Rectangle Stabbing} in polynomial time. 
Without loss of generality, for every $i\in [k]$ we have that $V_i = \{v_1^i,v_2^i,\dots,v_r^i\}$.
We will construct an instance $(\mathcal{R},\mathcal{L})$ of \textsc{Rectangle Stabbing} with parameter $4k$. 
$\mathcal{L} = \{\ell_h^y ~:~ y \in \mathbb{Z}\} \cup \{\ell_v^x ~:~ x \in \mathbb{Z}\}$. In other words, $\mathcal{L}$ consists of all lines at integer positions. 
%the integer coordinate horizontal lines and vertical lines in the area $[2r+1,2kr+2r]\times [2r+1,2kr+2r]$ (including the boundaries). 
%
%$\mathcal{L}$ consists of all the integer coordinate horizontal lines and vertical lines in the area $[2r+1,2kr+2r]\times [2r+1,2kr+2r]$ (including the boundaries). 
%\todo{is ${\cal L}$ the set of all integer lines? (Trary) I define as lines with non-negative positions (to show by $\mathcal{R}_F$, each interval is stabbed by exactly one line more easily), but if we allow degenerated rectangles then all integer lines work.}
The set ${\cal R}$ of rectangles consists of three parts $\mathcal{R}_F,\mathcal{R}_A,\mathcal{R}_E$, that is 
\(
    \mathcal{R}=\mathcal{R}_F\cup \mathcal{R}_A \cup \mathcal{R}_E.
\)

%\todo[inline]{what is the point of the $+2r$ offset in the strip definition?? (appears to be to avoid having an offset in $R_A$, and also makes some offsets in the last proof nicer) - keeping it.}
Before constructing each of the sets $\mathcal{R}_F,\mathcal{R}_A,\mathcal{R}_E$ we define a set of $4k$ strips, and briefly discuss the intuition behind the reduction.
It is worth pointing out that in contrast with the strips used in the algorithm section, here the strips are \textit{closed} in the sense that they include their bounding lines. 
For every $x\in [0,2k-1]_{\mathbb{Z}}$ we define the vertical closed strip $\zeta_v^x = [2r+xr+1,2r+xr+r]\times \mathbb{R}$
and the horizontal closed strip $\zeta_h^x = \mathbb{R}\times [2r+xr+1,2r+xr+r]$.
Since these are the only (closed or open) strips discussed in this reduction, we will simply refer to the closed strips above as strips.

The intention of the construction is that a solution of size $4k$ selects precisely one line from each of the above strips. This will be ensured by the rectangles in $\mathcal{R}_F$.
For each $i \in [k]$, each of the solution lines in the strips $\zeta_v^{2i-2}$, $\zeta_v^{2i-1}$, $\zeta_h^{2i-2}$ and $\zeta_h^{2i-1}$ encodes the choice of a vertex in $V_i$ in the multicolored clique.
For each $i \in [k]$ we will add rectangles in the squares
$\zeta_v^{2i-2} \cap \zeta_h^{2i-2}$, $\zeta_v^{2i-2} \cap \zeta_h^{2i-1}$, $\zeta_v^{2i-1} \cap \zeta_h^{2i-2}$ and $\zeta_v^{2i-1} \cap \zeta_h^{2i-1}$
%
%$(\zeta_v^{2i-2} \cup \zeta_v^{2i-1}) \cap (\zeta_h^{2i-2} \cup \zeta_h^{2i-1})$ 
that ensure that the four solution lines in the strips $\zeta_v^{2i-2}$, $\zeta_v^{2i-1}$, $\zeta_h^{2i-2}$ and $\zeta_h^{2i-1}$ all encode the {\em same} vertex in $V_i$. The set of these rectangles is $\mathcal{R}_E$.
Finally, for every ordered pair $i, j \in [k]$ with $i \neq j$ we add rectangles in the square\footnote{Technically, this set is not a square, but a square with a cross cut out of it. This is an imprecision for the sake of simplifying notation in the intuitive explanation of the reduction.} $(\zeta_v^{2i-2} \cup \zeta_v^{2i-1}) \cap (\zeta_h^{2j-2} \cup \zeta_h^{2j-1})$ that ensure that, as long as the solution lines in $(\zeta_v^{2i-2}$ and $\zeta_v^{2i-1})$ select the same vertex in  $V_i$, and the solution lines in $(\zeta_h^{2j-2}$ and $\zeta_h^{2j-1})$ select the same vertex in  $V_j$, then the vertex selected in $V_i$ and the vertex selected in $V_j$ are adjacent in $G$. This set of rectangles is $\mathcal{R}_A$. 

\subsection{Precise Construction of the Hardness Reduction}

%We now describe the set ${\cal R}$ of rectangles, which consists of three parts $\mathcal{R}_F,%\mathcal{R}_A,\mathcal{R}_E$, that is 
%\[
    %\mathcal{R}=\mathcal{R}_F\cup \mathcal{R}_A \cup \mathcal{R}_E
%\]
%In the following, we describe the construction of each of the sets $\mathcal{R}_F,\mathcal{R}_A$ and $\mathcal{R}_E$.

We first describe $\mathcal{R}_F$, the set of rectangles that {\bf f}orce at at least one line in each of the $4k$ strips. 
%This set of rectangles ensures that every solution of size exactly $4k$ contains for each $x\in [0,2k-1]_{\mathbb{Z}}$ exactly one vertical line in the vertical strip $\zeta_v^x$ and exactly one horizontal line in the horizontal strip $\zeta_h^x$ (this is not entirely true, but it is morally true, see the statement of Lemma~\ref{}).
%

For every $x \in [0,2k-1]_\mathbb{Z}$ and every $q \in [-5k, -1]_{\mathbb{Z}}$ we define the rectangles 
\[F_v^{x,q} = [2r+xr+1,2r+xr+r]\times \{q\},\]
\[F_h^{x,q} = \{q\} \times [2r+xr+1,2r+xr+r].\]
We set 
\[\mathcal{R}_F = \{F_v^{x,q}, F_h^{x,q} ~:~ x \in [0,2k-1]_\mathbb{Z}, ~q \in [-5k, -1]_{\mathbb{Z}}\}.\]
%\cup \{F_h^x ~:~ x \in [0,2k-1]_\mathbb{Z}\}$

%$\mathbb{R}\times [2r+xr+1,2r+xr+r]$, for $x\in [0,2k-1]\cap \mathbb{N}$. We set 
%\[
%\begin{array}{rl}
%    \mathcal{R}_F = & \{[2r+xr+1,2r+xr+r]\times [-2x-2,-2x-1]:x\in [0,2k-1]\cap \mathbb{N}\} \\
%     & \cup\{[-2x-2,-2x-1]\times [2r+xr+1,2r+xr+r]:x\in [0,2k-1]\cap \mathbb{N}\}
%\end{array}
%\]

Next we describe $\mathcal{R}_A$, the rectangles encoding the {\bf a}djacency matrix of $G$. 
%This set of rectangles is to encode the adjacency matrix. 
For every $i,j\in [k]$ such that $i\neq j$, for every $p,q\in [r]$ such that $(v_p^i,v_q^j)\notin E(G)$, we define the rectangle 
\[A_{p,q}^{i,j} = [2ir+p+1,2ir+r+p-1]\times [2jr+q+1,2jr+r+q-1].\]
Then we set 
\[\mathcal{R}_A =  \{A_{p,q}^{i,j} ~:~ (v_p^i,v_q^j)\notin E \mbox{ and } i \neq j\}.\]
%
%\[
%\begin{array}{rl}
%    \mathcal{R}_A = & \{[2ir+p+1,2ir+r+p-1]\times[2jr+q+1,2jr+r+q-1]:(v_p^i,v_q^j)\notin E\}
%\end{array}
%\]
%
Note that for every non-edge $(v_p^i,v_q^j)$ of $G$ with $i \neq j$, both $A_{p,q}^{i,j}$ and $A_{q,p}^{j,i}$ are rectangles in $\mathcal{R}_A$.
%Notice that by our definition, $[2jr+q+1,2jr+r+q-1]\times [2ir+p+1,2ir+r+p-1]$ is also in $\mathcal{R}_A$. \todo{phrase better. Something about ordered pairs}

Finally we describe $\mathcal{R}_E$, the set of rectangles that ensures {\bf e}quality of the vertices selected by the solution lines in the strips
$\zeta_v^{2i-2}$, $\zeta_v^{2i-1}$, $\zeta_h^{2i-2}$ and $\zeta_h^{2i-1}$.
%
%. This set of rectangles is to ensure that for all $i\in [k]$, the lines in strips $[2ir+1,2ir+r]\times \mathbb{R},[2ir+r+1,2ir+2r]\times \mathbb{R},\mathbb{R}\times [2ir+1,2ir+r]$ and $\mathbb{R}\times [2ir+r+1,2ir+2r]$ represent the same vertex (if exactly one line is picked in each strip).
%
%Let 
%$\mathcal{R}_E^{x,y} = \{[xr+1,xr+a]\times [yr+r-b,yr+r]:1\leq a,b\leq r-1, a+b = r-1, a,b\in \mathbb{N}\}\cup \{[xr+r-b,xr+r]\times [yr+1,yr+a]:1\leq a,b\leq r-1, a+b = r-1, a,b\in \mathbb{N}\}$. 
%
%
For every pair $x, y \in [0, 2k-1]_\mathbb{Z}$ such that $\lfloor x/2 \rfloor = \lfloor y/2 \rfloor$ and every integer $a \in [2, r]_\mathbb{Z}$ we define the ``top left'' rectangle
\[TL^{x,y}_a = [2r + xr + 1, 2r + xr + a - 1] \times [2r + yr + a, 2r + yr + r]\]
and the ``bottom right'' rectangle
\[BR^{x,y}_a = [2r + xr + a, 2r + xr + r] \times [2r + yr + 1, 2r + yr + a - 1].\]
%\[\mathcal{R}_E^{x,y} = \{[xr+1,xr+a]\times [yr+r-b,yr+r]:1\leq a,b\leq r-1, a+b = r-1, a,b\in \mathbb{N}\}\]
%\[\{[xr+r-b,xr+r]\times [yr+1,yr+a]:1\leq a,b\leq r-1, a+b = r-1, a,b\in \mathbb{N}\}\]
%
We set
%\[{\cal R}_E = \bigcup_{\substack{x,y \in [0, 2k-1]_\mathbb{Z} \mbox{ s.t.} \\ \lfloor x/2 \rfloor = \lfloor y/2 \rfloor \\ a \in  [2, r]_\mathbb{Z}}} \{TL^{x,y}_a, BR^{x,y}_a\}\]
\[{\cal R}_E = \{TL^{x,y}_a, BR^{x,y}_a ~:~ x,y \in [0, 2k-1]_\mathbb{Z} \mbox{ s.t. } \lfloor x/2 \rfloor = \lfloor y/2 \rfloor, ~a \in  [2, r]_\mathbb{Z}\}.\]
%We set
%\[
%\begin{array}{rl}
%    \mathcal{R}_E = & \mathcal{R}_E^{2i,2i}\cup \mathcal{R}_E^{2i+1,2i}\cup \mathcal{R}_E^{2i,2i+1}\cup \mathcal{R}_E^{2i+1,2i+1}
%\end{array}
%\]
This concludes the construction of the instance of {\sc Rectangle Stabbing}, see Figure \ref{fig:hardness-reduction-construction}.
% A visualization of the construction is shown in Figure \ref{fig:hardness-reduction-construction}.

\begin{figure}[H]
    \centering
    \includegraphics[width=0.9\linewidth]{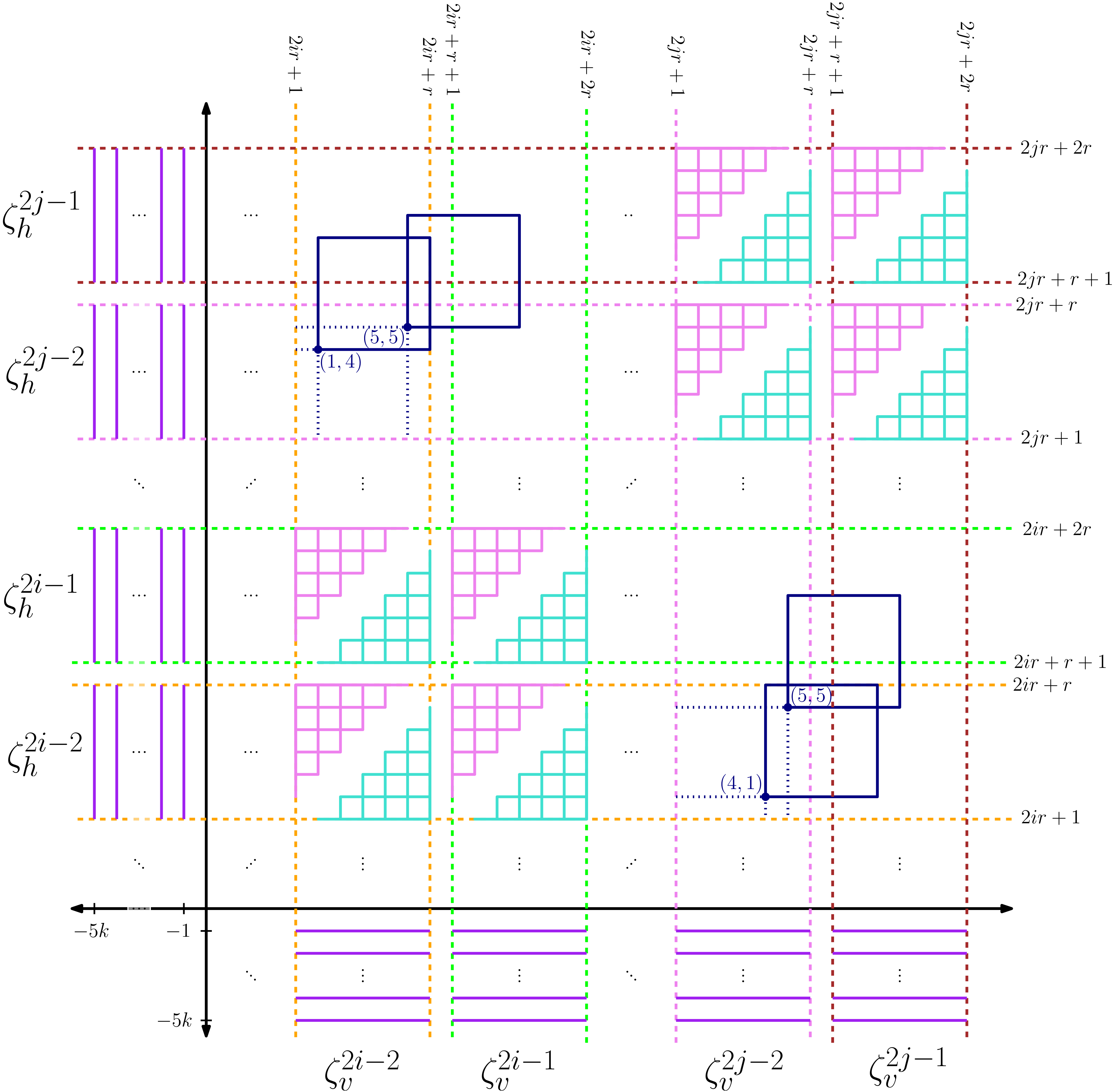}
    \caption{\textit{Example snippet of the reduction construction from {\sc Multicolored Clique} to {\sc Rectangle Stabbing}.} The dark purple rectangles (line segments) with one negative coordinate are in ${\cal R}_F$. The dark blue rectangles intersecting $\zeta_v^{2i-2} \cap \zeta_h^{2j-2}$ and $\zeta_v^{2j-2} \cap \zeta_h^{2i-2}$ are in ${\cal R}_A$. The pink and cyan rectangles arranged like staircases are in ${\cal R}_E$.\\[5pt]
    Notice \textbf{in this example}, $r = 7$ and the bottom left endpoints of the rectangles intersecting $\zeta_v^{2i-2} \cap \zeta_h^{2j-2}$ have an offset of $(1, 4)$ and $(5, 5)$ from $(2ir+1, 2jr+1)$. These represent the non-edges $(v^i_1, v^j_4),\ (v^i_5, v^j_5) \not\in E(G)$ for this example graph $G$.}
    \label{fig:hardness-reduction-construction}
\end{figure}

We make two remarks about the construction in Theorem~\ref{thm-hardness}. First, note that the set ${\cal L}$ of lines produced by the construction in Section~\ref{sec:construction} is simply the set of all lines with integer coordinates. For a finite set of explicitly given lines one could restrict ${\cal L}$ to all vertical lines with integer position in the vertical strips
$\{\zeta_v^x ~:~ x \in [0,2k-1]_{\mathbb{Z}} \}$ and all horizontal lines with integer position in the horizontal strips $\{\zeta_h^x ~:~ x \in [0,2k-1]_{\mathbb{Z}} \}$.

Second, the construction in Section~\ref{sec:construction} produces some degenerate rectangles (that is, rectangles on the form $[a, b] \times [c, d]$ where $a \leq b$, $c \leq d$ and either $a=b$ or $c=d$ or both equalities hold). 
This is simply for ease of notation, and degenerate rectangles are not necessary for the reduction to work. 
Indeed it is easy to see that replacing every rectangle $[a, b] \times [c, d]$ in ${\cal R}$
with the rectangle $[2a, 2b+1] \times [2c, 2d+1]$ to produce the set ${\cal R}'$, and keeping ${\cal L}$ as the set of all lines with integer positions gives an equivalent instance in the following sense. 
A subset ${\cal H}' \cup {\cal V}'$ of lines stabs ${\cal R}'$ if and only if the set $\{l_h^{\lfloor y/2 \rfloor} ~:~ l_h^y \in {\cal H}'\} \cup \{l_v^{\lfloor x/2 \rfloor} ~:~ l_v^x \in {\cal V}'\}$ stabs ${\cal R}$.
Thus, algorithm ${\cal A}$ in the proof of Theorem~\ref{thm-hardness} can be run on ${\cal R}'$ instead of ${\cal R}$, and the statement of Theorem~\ref{thm-hardness} holds even for algorithms that require non-degenerate input rectangles. 

\subsection{Proof of Correctness for the Hardness Reduction}

\begin{lemma}\label{lem:forwardDir}
If there exists a multicolored clique of size $k$ in $G$, then there exists a set of horizontal lines $\mathcal{H}$ and a set of vertical lines $\mathcal{V}$ such that 
$\mathcal{H}\cup\mathcal{V} \subseteq {\cal L}$, $|\mathcal{H}\cup\mathcal{V}| = 4k$ and $\mathcal{H}\cup \mathcal{V}$ stabs $\mathcal{R}$.
\end{lemma}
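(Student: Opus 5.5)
The plan is to place the $4k$ lines explicitly from the clique, two vertical and two horizontal lines per colour class. Then check the three families $\mathcal{R}_F$, $\mathcal{R}_E$, $\mathcal{R}_A$ separately by a coordinate computation.

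\textbf{Choice of lines.} Let the multicolored clique be $\{v^1_{c_1},\dots,v^k_{c_k}\}$ with $c_i\in[r]$. For each $x\in[0,2k-1]_\mathbb{Z}$, put $i=\lfloor x/2\rfloor+1$. Add the vertical line $\ell_v^{2r+xr+c_i}$ to $\mathcal{V}$ and the horizontal line $\ell_h^{2r+xr+c_i}$ to $\mathcal{H}$. These are $4k$ distinct integer lines. They lie in $\mathcal{L}$, and each is contained in the corresponding strip $\zeta_v^x$ or $\zeta_h^x$. Thus in every strip the chosen line sits at offset $c_i$ from the strip's lower end.

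\textbf{$\mathcal{R}_F$ and $\mathcal{R}_E$.} $F_v^{x,q}$ is a horizontal segment whose $x$-projection is exactly that of $\zeta_v^x$, so the vertical line chosen in $\zeta_v^x$ stabs it. $F_h^{x,q}$ is handled symmetrically.

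For $TL^{x,y}_a$, the indices $x,y$ lie in the same block, so both strips use the same offset $c=c_i$. Its $x$-projection covers offsets $1,\dots,a-1$ of $\zeta_v^x$, so the vertical line stabs it when $c\le a-1$. Its $y$-projection covers offsets $a,\dots,r$ of $\zeta_h^y$, so the horizontal line stabs it when $c\ge a$. One of the two cases always holds. $BR^{x,y}_a$ is handled the same way with the roles of the two inequalities swapped.

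\textbf{$\mathcal{R}_A$.} This is the only step that uses the clique property. Note that $2ir=2r+(2i-2)r$. Hence the chosen vertical lines for block $i$ are at $2ir+c_i$ (in $\zeta_v^{2i-2}$) and $2ir+r+c_i$ (in $\zeta_v^{2i-1}$). The $x$-projection of $A^{i,j}_{p,q}$ is $[2ir+p+1,\,2ir+r+p-1]$, which is nonempty since $r\ge 2$ (the case $r=1$ is trivial). It contains $2ir+c_i$ iff $c_i\ge p+1$, and contains $2ir+r+c_i$ iff $c_i\le p-1$. So $A^{i,j}_{p,q}$ escapes the vertical lines exactly when $c_i=p$. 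By the same computation in the $y$-direction, it escapes the horizontal lines exactly when $c_j=q$. Since $A^{i,j}_{p,q}$ exists only when $(v^i_p,v^j_q)\notin E(G)$, while $v^i_{c_i}$ and $v^j_{c_j}$ are adjacent, we cannot have both $c_i=p$ and $c_j=q$. Therefore the rectangle is stabbed.

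The main care needed is this off-by-one bookkeeping in the $\mathcal{R}_A$ case, namely checking that the rectangle's interval covers every offset in the pair of strips except exactly $p$. After that, the non-edge condition finishes the argument immediately.
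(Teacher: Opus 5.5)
Your proof is correct and follows essentially the same route as the paper: you place the same $4k$ lines, at offset $c_i$ inside each of the four strips of block $i$, and then verify $\mathcal{R}_F$, $\mathcal{R}_E$ and $\mathcal{R}_A$ by direct coordinate checks, with the non-edge condition ruling out $c_i=p$ and $c_j=q$ holding simultaneously. Your indexing $i=\lfloor x/2\rfloor+1$ is the right one; the paper's $\mathcal{R}_E$ check writes $i=\lfloor x/2\rfloor$, which is an off-by-one.
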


\begin{proof}
Suppose that $C$ is a multicolored clique of size $k$ in $G$. For each $i \in [k]$ let $r_i$ be such that $\{v_{r_i}^i\} = C \cap V_i$.
%. Since $|C|=k$ and $C$ is multicolored, it has to be the case where $|C\cap V_i|=1$ for each $i\in [k]$. Let $C\cap V_i = \{v_{p_i}^i\}$. 
%
%
%For every $x \in [0, 2k-1]_{\mathbb{Z}}$ 
%let $\hat{h}^x$ be the line $\hat{h}^x = \ell_h^{2r+xr+r_{\lfloor x/2 \rfloor}}$ and 
%$\hat{v}^x$ be the line $\hat{v}^x = \ell_v^{2r+xr+r_{\lfloor x/2 \rfloor}}$. 
%\todo[inline]{maybe redefine the above using $i$ from $1..k$ to avoid the floor in exponent DONE BELOW}
%
For every $i \in [k]_{\mathbb{Z}}$ let
$\hat{h}^{2i-2}$ be the line $\hat{h}^{2i-2} = \ell_h^{2r+(2i-2)r+r_{i}}$,
$\hat{h}^{2i-1}$ be the line $\hat{h}^{2i-1} = \ell_h^{2r+(2i-1)r+r_{i}}$,
$\hat{v}^{2i-2}$ be the line $\hat{v}^{2i-2} = \ell_v^{2r+(2i-2)r+r_{i}}$, and 
$\hat{v}^{2i-1}$ be the line $\hat{v}^{2i-1} = \ell_v^{2r+(2i-1)r+r_{i}}$.
%
%\todo[inline]{maybe redefine the above using $i$ from $1..k$ to avoid the floor in exponent}
%
Observe that for every $x \in [0, 2k-1]_{\mathbb{Z}}$ we have 
$\hat{h}^x$ lies in the strip $\zeta_h^x$ while $\hat{v}^x$ is in the strip $\zeta_v^x$.
We set $\mathcal{H} = \{\hat{h}^x ~:~ x \in [0, 2k-1]_{\mathbb{Z}} \}$ and
$\mathcal{V} = \{\hat{v}^x ~:~ x \in [0, 2k-1]_{\mathbb{Z}} \}$.
%
%
%$ to be the set of lines with coordinates in $\{2ir+p_i:i\in [k]\}\cup \{2ir+r+p_i:i\in [k]\}$, then set $\mathcal{L}$ to be the set of lines with coordinates in $\{2ir+p_i:i\in [k]\}\cup \{2ir+r+p_i:i\in [k]\}$. 
%
Thus $|\mathcal{H}\cup \mathcal{V}| = 4k$. We now prove that all rectangles in ${\cal R}$ are stabbed by $\mathcal{H}\cup \mathcal{V}$.

First we verify that $\mathcal{R}_F$ is stabbed by $\mathcal{H}\cup \mathcal{V}$. 
For every $x \in [0,2k-1]_\mathbb{Z}$ and every $q \in [-5k, -1]_{\mathbb{Z}}$ 
$F_v^{x,q}$ is stabbed by the vertical line $\hat{v}^x$
and $F_h^{x,q}$ is stabbed by the horizontal line $\hat{h}^x$.

%Firstly, let's check the rectangles in $\mathcal{R}_F$. For every rectangle of the form $[2r+xr+1,2r+xr+r]\times [-2x-2,-2x-1],x\in [0,2k-1]\cap \mathbb{N}$, it is stabbed by the vertical line with coordinate $2r+xr+p_{\lceil (x+1)/2\rceil}$. For every rectangle of the form $[-2x-2,-2x-1]\times [2r+xr+1,2r+xr+r] ,x\in [0,2k-1]\cap \mathbb{N}$, it is stabbed by the horizontal line with coordinate $2r+xr+p_{\lceil (x+1)/2\rceil}$.

Next we check that $\mathcal{R}_A$ is stabbed by $\mathcal{H}\cup \mathcal{V}$. Consider a rectangle $A^{i,j}_{p,q} \in \mathcal{R}_A$. Note that $i,j \in [k]$ and $i \neq j$. 
Since $C$ is a clique we have that $r_i\neq p$ or $r_j\neq q$, and therefore
$r_i \leq p-1$ or $r_i \geq p+1$ or $r_j\leq q-1$
%\todo{this should be $\leq$ not $\neq$? if you agree then please change it :) }
or $r_j\geq q+1$. We check the four possible cases:
\begin{itemize}\setlength\itemsep{-.7pt}
\item If $r_i \geq p+1$ then $\hat{v}^{2i-2}$ stabs $A^{i,j}_{p,q}$,
\item if $r_i \leq p-1$ then $\hat{v}^{2i-1}$ stabs $A^{i,j}_{p,q}$,
\item if $r_j \geq q+1$ then $\hat{h}^{2j-2}$ stabs $A^{i,j}_{p,q}$, and
\item if $r_j \leq q-1$ then $\hat{h}^{2j-1}$ stabs $A^{i,j}_{p,q}$.
\end{itemize}
%If $p_i\neq p$, then the vertical line with coordinate $2ir+p_i$ stabs this rectangle when $p_i\geq p+1$, or the vertical line with coordinate $2ir+r+p_i$ stabs this rectangle when $p_i\leq p-1$. If $p_j\neq j$, then the horizontal line with coordinate $2jr+p_j$ stabs this rectangle when $p_j\geq p+1$, or the horizontal line with coordinate $2jr+r+p_j$ stabs this rectangle when $p_j\leq p-1$.

%Secondly, let's check the rectangles in $\mathcal{R}_A$. Fix a rectangle of the form $[2ir+p+1,2ir+r+p-1]\times[2jr+q+1,2jr+r+q-1],(v_p^i,v_q^j)\notin E$. Since $C$ is a clique, either $p_i\neq p$ or $p_j\neq j$. If $p_i\neq p$, then the vertical line with coordinate $2ir+p_i$ stabs this rectangle when $p_i\geq p+1$, or the vertical line with coordinate $2ir+r+p_i$ stabs this rectangle when $p_i\leq p-1$. If $p_j\neq j$, then the horizontal line with coordinate $2jr+p_j$ stabs this rectangle when $p_j\geq p+1$, or the horizontal line with coordinate $2jr+r+p_j$ stabs this rectangle when $p_j\leq p-1$.
Finally we verify that  $\mathcal{H}\cup \mathcal{V}$ stabs $\mathcal{R}_E$.
Let $x$ and $y \in [0, 2k-1]_\mathbb{Z}$ be such that $\lfloor x/2 \rfloor = \lfloor y/2 \rfloor$, and $a$ be an element of $[2, r]_\mathbb{Z}$. 
Let $i = \lfloor x/2 \rfloor = \lfloor y/2 \rfloor$.
We now verify that the rectangles $TL_a^{x,y}$ and $BR_a^{x,y}$ are stabbed by $\mathcal{H}\cup \mathcal{V}$.
\begin{itemize}\setlength\itemsep{-.7pt}
\item If $r_i \geq a$ then $TL^{x,y}_a$ is stabbed by $\hat{h}^y$, and 
\item if $r_i \leq a-1$ then $TL^{x,y}_a$ is stabbed by $\hat{v}^x$.
\item If $r_i \geq a$ then $BR^{x,y}_a$ is stabbed by $\hat{v}^x$, and
\item if $r_i \leq a-1$ then $BR^{x,y}_a$ is stabbed by $\hat{h}^y$.
\end{itemize}
Thus all rectangles in ${\cal R}$ are stabbed by $\mathcal{H}\cup \mathcal{V}$, concluding the proof.
\end{proof}

\begin{lemma}\label{lem:reverseDir}
%There exists an algorithm that takes as input $G, k, V_1, \ldots, V_k$, ${\cal R}$, ${\cal L}$, a real $\epsilon > 0$ and a set $\mathcal{H}$ of horizontal lines and a set $\mathcal{V}$ of vertical lines such that $|\mathcal{H}\cup\mathcal{V}| \leq 5k-\epsilon k$ and $\mathcal{H} \cup \mathcal{V}$ stabs $\mathcal{R}$, runs in polynomial time, and outputs a multicolored clique $C$ of size at least $\epsilon k$ in $G$.
There exists a polynomial time algorithm that takes as input $G, k, V_1, \ldots, V_k$, ${\cal R}$, ${\cal L}$, a real $\epsilon > 0$, a set $\mathcal{H}$ of horizontal lines, and a set $\mathcal{V}$ of vertical lines such that $|\mathcal{H} \cup \mathcal{V}| \leq 5k-\epsilon k$ and $\mathcal{H} \cup \mathcal{V}$ stabs $\mathcal{R}$, and outputs a multicolored clique $C$ of size at least $\epsilon k$ in $G$.
\end{lemma}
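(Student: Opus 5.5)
Given a stabbing set $\mathcal{H}\cup\mathcal{V}$ of size at most $(5-\epsilon)k$, the plan is to extract a partial clique from the groups $i\in[k]$ that use the budget efficiently. For $x\in[0,2k-1]_\mathbb{Z}$ let $n_v^x$ (resp.\ $n_h^x$) be the number of lines of $\mathcal{V}$ (resp.\ $\mathcal{H}$) lying in $\zeta_v^x$ (resp.\ $\zeta_h^x$). First I show every one of these $4k$ strips contains at least one solution line. Suppose $\zeta_v^x$ contains no vertical solution line. The $5k$ rectangles $F_v^{x,q}$, $q\in[-5k,-1]_\mathbb{Z}$, are horizontal segments at distinct heights whose $x$-extent lies inside $\zeta_v^x$. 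A vertical line outside the strip misses all of them, so each must be stabbed by its own horizontal line $\ell_h^q$. That would require $5k>(5-\epsilon)k$ lines, a contradiction. The same argument works for horizontal strips. So each strip has $n\ge 1$.

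Next, call a group $i\in[k]$ \emph{good} if each of its four strips $\zeta_v^{2i-2},\zeta_v^{2i-1},\zeta_h^{2i-2},\zeta_h^{2i-1}$ contains exactly one solution line, and \emph{bad} otherwise. A bad group has at least $5$ lines in its four strips, and the strips of different groups are disjoint. Hence $4k+\#\text{bad}\le(5-\epsilon)k$, so $\#\text{bad}\le(1-\epsilon)k$ and $\#\text{good}\ge\epsilon k$. A naive count of all strips with more than one line would only give $4k+\#\{\text{extra lines}\}$; grouping the extras per $i$ is what yields the factor $5$ in the statement.

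For a good group $i$, the $\mathcal{R}_E$ gadgets should force all four lines to encode one vertex. Let the lines be at offsets $a_x$ (vertical, in $\zeta_v^x$) and $b_y$ (horizontal, in $\zeta_h^y$), with $x,y\in\{2i-2,2i-1\}$ and offsets in $[1,r]$. Fix $x,y$. The rectangles $TL^{x,y}_a$ and $BR^{x,y}_a$ lie in $\zeta_v^x\cap\zeta_h^y$, so the only solution lines that can hit them are the single lines in $\zeta_v^x$ and $\zeta_h^y$. Lines of other strips of the same orientation lie outside these closed strips, since the strips are disjoint and lines sit at integer positions.

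From $TL_a$ with $a=a_x+1$: the vertical line misses it, so $b_y\ge a_x+1$ would be needed. From $BR_a$ with $a=b_y+1$: we similarly need $a_x\ge b_y+1$. I expect the precise consequence to be $a_x=b_y$; checking the inequalities at the boundaries is the main fiddly step. Concretely, $TL_a$ is stabbed iff $a_x\le a-1$ or $b_y\ge a$, and $BR_a$ is stabbed iff $a_x\ge a$ or $b_y\le a-1$. Taking $a=\max(a_x,b_y)$ when $a_x\ne b_y$ gives a violated rectangle. Applying this to all four pairs $(x,y)$ gives a common value $c_i$, and we put $v^i_{c_i}$ into $C$.

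Finally, I verify adjacency. Take good $i\ne j$ and suppose $(v^i_{c_i},v^j_{c_j})\notin E$. Then $A^{i,j}_{c_i,c_j}$ lies in $(\zeta_v^{2i-2}\cup\zeta_v^{2i-1})\times(\zeta_h^{2j-2}\cup\zeta_h^{2j-1})$, up to the coordinate offset $2ir=2r+(2i-2)r$. Only the four lines of groups $i$ and $j$ can stab it. The vertical lines sit at $2ir+c_i$ and $2ir+r+c_i$, which lie just outside the rectangle's $x$-range $[2ir+c_i+1,2ir+r+c_i-1]$. The same holds for the horizontal lines, so the rectangle is unstabbed, a contradiction. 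Thus $C$ is a multicolored clique of size at least $\epsilon k$, and it is computed in polynomial time by counting lines per strip and reading off offsets.
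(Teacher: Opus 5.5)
Your proof is correct and follows the paper's argument essentially step for step: the $\mathcal{R}_F$ rectangles force a line in each of the $4k$ strips, the count $4k+\#\text{bad}\le(5-\epsilon)k$ gives at least $\epsilon k$ good groups, the $TL/BR$ rectangles with $a=\max(a_x,b_y)$ force all four offsets of a good group to agree, and an unstabbed $A^{i,j}_{c_i,c_j}$ rules out non-edges. One small cleanup: your heuristic sentence about $TL_{a_x+1}$ and $BR_{b_y+1}$ is off by one (the line at offset $a_x$ does stab $TL^{x,y}_{a_x+1}$, whose $x$-range ends at offset $a_x$), so drop it and rely on the exact stabbing conditions you state next, which are right.
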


\begin{proof}
Let $\mathcal{H}$ be a set of horizontal lines and $\mathcal{V}$ be a set of vertical lines such that $|\mathcal{H} \cup \mathcal{V}| \leq 5k-\epsilon k$ and $\mathcal{H} \cup \mathcal{V}$ stabs $\mathcal{R}$.

\begin{claim}\label{clm:atLeastFour}
For every $x \in [0, 2k-1]_\mathbb{Z}$, 
$\mathcal{H}$ contains at least one line in the strip $\zeta^x_h$ and 
$\mathcal{V}$ contains at least one line in the strip $\zeta^x_v$.
\end{claim}
\begin{proof}
Suppose for contradiction that there exists an  $x \in [0, 2k-1]_\mathbb{Z}$ such that $\mathcal{H}$ does not contain any lines in the strip $\zeta^x_h$. For every $q \in [-5k, -1]_\mathbb{Z}$ the rectangle $F_h^{x,q}$ is not stabbed by ${\cal H}$, since every horizontal line stabbing $F_h^{x,q}$ is in $\zeta^x_h$. Thus the set $\{F_h^{x,q} ~:~ q \in [-5k, -1]_\mathbb{Z}\}$ is stabbed by ${\cal V}$. But then $|{\cal V}| \geq 5k$, because no vertical line stabs two rectangles in  $\{F_h^{x,q} ~:~ q \in [-5k, -1]_\mathbb{Z}\}$. This contradicts that $|\mathcal{H}\cup\mathcal{V}| \leq 5k-\epsilon k$. This shows that for every $x \in [0, 2k-1]_\mathbb{Z}$, $\mathcal{H}$ contains at least one line in the strip $\zeta^x_h$. The proof that $\mathcal{V}$ contains at least one line in the strip $\zeta^x_v$ is symmetric.  
\end{proof}

Let $A$ be the subset of all integers $i \in [k]$ such that each strip $\zeta^{2i-2}_h$, $\zeta^{2i-1}_h$, $\zeta^{2i-2}_v$ and $\zeta^{2i-1}_v$ contains precisely one line from $\mathcal{H} \cup \mathcal{V}$.
We claim that $|A| \geq \epsilon \cdot k$. Aiming towards a contradiction, suppose $|A| < \epsilon \cdot k$. Then, by Claim~\ref{clm:atLeastFour} for each $i \in [k]$ we have that the strips 
$\{\zeta^{2i-2}_h, \zeta^{2i-1}_h, \zeta^{2i-2}_v, \zeta^{2i-1}_v\}$ contain at least $4$ lines from $\mathcal{H}\cup\mathcal{V}$ if $i \in A$ and at least $5$ lines from $\mathcal{H}\cup\mathcal{V}$ if $i \notin A$.
It follows that $|\mathcal{H}\cup\mathcal{V}| \geq 4|A| + 5(k - |A|) = 5k - |A| > 5k - \epsilon k$, contradicting $|\mathcal{H}\cup\mathcal{V}| \leq 5k-\epsilon k$. Hence  $|A| \geq \epsilon \cdot k$.

For every $i \in A$ we define the integers $v\langle i,-\rangle$, $v\langle i,+\rangle$, $h\langle i,-\rangle$, $h\langle i,+\rangle$ as the unique integers in $[r]$ such that 
\begin{itemize}\setlength\itemsep{-.7pt}
    \item $\ell_v^{2ir+v\langle i,-\rangle} \in {\cal V}$ and lies in $\zeta^{2i-2}_v$,

    \item $\ell_v^{2ir+r+v\langle i,+\rangle} \in {\cal V}$ and lies in $\zeta^{2i-1}_v$,

    \item $\ell_h^{2ir+h\langle i,-\rangle} \in {\cal H}$ and lies in $\zeta^{2i-2}_h$,

    \item $\ell_h^{2ir+r+h\langle i,+\rangle} \in {\cal H}$ and lies in $\zeta^{2i-1}_h$.
\end{itemize}

\begin{claim}\label{clm:equalPos}
For every $i \in A$ we have $v\langle i,-\rangle = h\langle i,+\rangle$.
\end{claim}

\begin{proof}
Suppose for contradiction that $v\langle i,-\rangle > h\langle i,+\rangle$.
Then $v\langle i,-\rangle \geq 2$.
Since $\lfloor (2i-2)/2 \rfloor = \lfloor (2i-1)/2 \rfloor$, we have that
$TL^{2i-2,2i-1}_{v\langle i,-\rangle}$ is a rectangle in ${\cal R}_E$.
Since $TL^{2i-2,2i-1}_{v\langle i,-\rangle} \subseteq \zeta^{2i-2}_v \cap \zeta^{2i-1}_h$,
$\ell_v^{2ir+v\langle i,-\rangle}$ is the unique line in ${\cal V}$ which lies in $\zeta^{2i-2}_v$,
and
$\ell_h^{2ir+r+h\langle i,+\rangle}$ is the unique line in ${\cal H}$ which lies in $\zeta^{2i-1}_h$, it follows that 
$TL^{2i-2,2i-1}_{v\langle i,-\rangle}$ is not stabbed by $\mathcal{H}\cup\mathcal{V} - \{\ell_v^{2ir+v\langle i,-\rangle}, \ell_h^{2ir+r+h\langle i,+\rangle} \}$.
However, the largest $x$-coordinate of a point in $TL^{2i-2,2i-1}_{v\langle i,-\rangle}$ is 
\[2r + (2i-2)r + v\langle i,-\rangle - 1 < 2ir +  v\langle i,-\rangle\mbox{,}\]
and therefore $\ell_v^{2ir+v\langle i,-\rangle}$ does not stab  $TL^{2i-2,2i-1}_{v\langle i,-\rangle}$.
Furthermore, the smallest $y$-coordinate of a point in $TL^{2i-2,2i-1}_{v\langle i,-\rangle}$ is
\[2r + (2i-1)r + v\langle i,-\rangle = 2ir+r+ v\langle i,-\rangle > 2ir+r+h\langle i,+\rangle\mbox{,}\]
and therefore $\ell_h^{2ir+r+h\langle i,+\rangle} \in {\cal H}$ does not stab $TL^{2i-2,2i-1}_{v\langle i,-\rangle}$. This contradicts that 
$TL^{2i-2,2i-1}_{v\langle i,-\rangle-1}$ is stabbed by $\mathcal{H}\cup\mathcal{V}$, so we conclude that $v\langle i,-\rangle \leq h\langle i,+\rangle$.

Next suppose for contradiction that $v\langle i,-\rangle < h\langle i,+\rangle$. 
Then $h\langle i,+\rangle \geq 2$.
Since $\lfloor (2i-2)/2 \rfloor = \lfloor (2i-1)/2 \rfloor$, we have that
$BR^{2i-2,2i-1}_{h\langle i,+\rangle}$ is a rectangle in ${\cal R}_E$.
Again $BR^{2i-2,2i-1}_{h\langle i,+\rangle}$ is not stabbed by $\mathcal{H}\cup\mathcal{V} - \{\ell_v^{2ir+v\langle i,-\rangle}, \ell_h^{2ir+r+h\langle i,+\rangle} \}$.
However the largest $y$-coordinate of $BR^{2i-2,2i-1}_{h\langle i,+\rangle}$ is
\[2r + (2i-1)r + h\langle i,+\rangle - 1 < 2ir+r+h\langle i,+\rangle\mbox{,}\]
and so $\ell_h^{2ir+r+h\langle i,+\rangle}$ does not stab $BR^{2i-2,2i-1}_{h\langle i,+\rangle}$
Furthermore the smallest $x$-coordinate of $BR^{2i-2,2i-1}_{h\langle i,+\rangle}$ is
\[2r + (2i-2)r + h\langle i,+\rangle = 2ir+h\langle i,+\rangle  > 2ir+v\langle i,-\rangle\mbox{,}\]
and so $\ell_v^{2ir+v\langle i,-\rangle}$ does not stab $BR^{2i-2,2i-1}_{h\langle i,+\rangle}$. This contradicts that $BR^{2i-2,2i-1}_{h\langle i,+\rangle}$ is stabbed by $v\langle i,-\rangle \leq h\langle i,+\rangle$, so we conclude that $v\langle i,-\rangle \geq h\langle i,+\rangle$.  
We have now shown that $v\langle i,-\rangle \leq h\langle i,+\rangle$ and that $v\langle i,-\rangle \geq h\langle i,+\rangle$, which implies that $v\langle i,-\rangle = h\langle i,+\rangle$, concluding the proof of the claim.
\end{proof}

Proofs symmetrical to that of Claim~\ref{clm:equalPos} show $v\langle i,-\rangle = h\langle i,-\rangle$, $v\langle i,+\rangle = h\langle i,-\rangle$, and $v\langle i,+\rangle = h\langle i,+\rangle$ as well. Hence we obtain that for every $i \in A$ we have 
\[v\langle i,-\rangle = v\langle i,+\rangle = h\langle i,-\rangle = h\langle i,+\rangle\mbox{.}\]

For each $i \in A$ we define $r\langle i\rangle = v\langle i,-\rangle$ and $C = \{v^i_{r\langle i\rangle} ~:~ i \in A\}$. Clearly $|C| = |A| \geq \epsilon k$ and so it remains to show that $C$ is in fact a clique in $G$. Suppose for contradiction that there exist distinct $i, j \in A$ such that $(v^i_{r\langle i\rangle}, v^j_{r\langle j\rangle})$ is not an edge of $G$.
%
%Suppose it is not, then there is some $i,j\in A$ such that $(v_{p_i}^i,v_{p_j}^j)\notin E$. 
 %
Consider the rectangle $A^{i,j}_{r\langle i\rangle, r\langle j\rangle}$. By construction it is not stabbed by any line with integer coordinates which is not in one of the strips $\zeta_v^{2i-2}$, $\zeta_v^{2i-1}$, $\zeta_h^{2j-2}$, or $\zeta_h^{2j-2}$.
Thus, no lines in  $\mathcal{H}\cup\mathcal{V} - \{\ell^{2ir + r\langle i\rangle}_v, \ell^{2ir + r + r\langle i\rangle}_v, \ell^{2jr + r\langle j\rangle}_h, \ell^{2jr + r + r\langle j\rangle}_h \}$ stab $A^{i,j}_{r\langle i\rangle, r\langle j\rangle}$.
However the rectangle $A^{i,j}_{r\langle i\rangle, r\langle j\rangle}$ is defined precisely so it is contained in the interior of the rectangle defined by the four lines $\{\ell^{2ir + r\langle i\rangle}_v, \ell^{2ir + r + r\langle i\rangle}_v, \ell^{2jr + r\langle j\rangle}_h, \ell^{2jr + r + r\langle j\rangle}_h \}$. Hence $A^{i,j}_{r\langle i\rangle, r\langle j\rangle}$ is not stabbed by any of $\{\ell^{2ir + r\langle i\rangle}_v, \ell^{2ir + r + r\langle i\rangle}_v, \ell^{2jr + r\langle j\rangle}_h, \ell^{2jr + r + r\langle j\rangle}_h \}$, and therefore it is not stabbed by  $\mathcal{H}\cup\mathcal{V}$ yielding the desired contradiction. 

Both the construction of $A$ from the input and the construction of $C$ from $A$ clearly take polynomial time. This concludes the proof. 
\end{proof}

We are now ready to prove Theorem~\ref{thm-hardness}. 
%\todo[inline]{if you want to re-sstate the theorem before proving it do it here!}

\begin{proof}[Proof of Theorem~\ref{thm-hardness}]
Suppose that there exists an $\epsilon > 0$, function $f : \mathbb{N} \rightarrow \mathbb{N}$, 
and an algorithm ${\cal A}$ which takes as input an instance $({\cal R}, {\cal L})$, an integer $k$,  runs in time $f(k) \cdot n^{O(1)}$, where $n = |\mathcal{R}| + |\mathcal{L}|$, and either concludes that no subset of at most $k$ lines of ${\cal L}$ stabs ${\cal R}$ or finds a set ${\cal L}^* \subseteq {\cal L}$ of size at most $\left(\frac{5}{4}-\epsilon\right)k$ that stabs ${\cal R}$.

We describe an approximation algorithm for {\sc Multicolored Clique}. Given an instance $G$,  $k$ and partition $V_1, V_2, \ldots V_k$ of $V(G)$ , the algorithm constructs the instance $({\cal R}, {\cal L}, 4k)$ of {\sc Rectangle Stabbing} using the reduction described in Section~\ref{sec:construction}.
It then runs the algorithm ${\cal A}$ on this instance. If ${\cal A}$ returns that ${\cal R}$ cannot be stabbed with at most $4k$ lines from ${\cal L}$, the algorithm returns that $G$ does not contain a multicolored clique of size $k$. This is correct by Lemma~\ref{lem:forwardDir}.
If ${\cal A}$ returns a set of at most $\left(\frac{5}{4}-\epsilon\right)k$ lines that stab ${\cal R}$, the algorithm applies Lemma~\ref{lem:reverseDir} to produce a multicolored clique $C$ in $G$ of size at least $\epsilon k$.
The running time of the algorithm is upper bounded by $f(4k)n^{O(1)}$.
Hence, by applying Theorem~\ref{prop:mccHardness} with $h(k) = 1/\epsilon$ to the resulting algorithm for {\sc Multicolored Clique} we obtain that FPT $=$ W[1].
\end{proof}

\section{Conclusion and future work} \label{sec-conclusion}

We have shown that {\sc Rectangle Stabbing} admits a $\frac{7}{4}$-approximation algorithm with a runtime of $k^{O(k)} \cdot n^{O(1)}$, where $n = |\mathcal{R}| + |\mathcal{L}|$. The approximation ratio of the algorithm improves upon the best-known for polynomial-time algorithms, and the running time is faster than what can be achieved for exact parameterized algorithms, assuming $\text{FPT} \neq \text{W[1]}$. Additionally, we showed that, assuming $\text{FPT} \neq \text{W[1]}$, no algorithm running in $f(k) \cdot n^{O(1)}$ time, for any computable function $f$, can achieve an approximation ratio better than $\frac{5}{4} - \varepsilon$, for any $\varepsilon > 0$.

In light of our work, a natural open problem is to find the tractability boundary of parameterized approximation for {\sc Rectangle Stabbing}, in particular find the constant $c$ such that there exists a parameterized $c$-approximation, while a parameterized $c'$-approximation for every $c'<c$ would imply FPT $=$ W[1] or violate the (Gap) Exponential Time Hypothesis. Another interesting direction is to determine whether the approximation ratio of $2$ is optimal when restricted to polynomial-time algorithms.

\bibliography{bibliography}

% \appendix
% \section{Appendix}
% \input{appendix}

\end{document}